\newtheorem{theorem}{Theorem}
\newtheorem{lemma}{Lemma}
\newtheorem{remark}{Remark}
\definecolor{myblue}{gray}{0.9}
\newtheorem{definition}{Definition}
\newtheorem{example}{Example}
\begin{document}

\title{An Efficient Algorithm for Finding Dominant Trapping Sets of LDPC Codes${}^{\ast}$ \thanks{${}^{\ast}$A preliminary
version of part of this work appeared in Proc. {\em 6th International Symposium on Turbo Codes \& Iterative Information Processing,} Brest, France, Sept. 6 - 10, 2010.} }

\author{Mehdi~Karimi,~\IEEEmembership{Student Member,~IEEE}
        and Amir~H.~Banihashemi,~\IEEEmembership{Senior Member,~IEEE}}
\maketitle

\thispagestyle{empty}

 \linespread{1.6}
 \selectfont

\begin{abstract}

This paper presents an efficient algorithm for finding the dominant trapping sets of a low-density
parity-check (LDPC) code. The algorithm can be used to estimate the error floor of
LDPC codes or to be used as a tool to design LDPC codes with low error floors.
For regular codes, the algorithm is initiated with a set of short cycles as the input. For irregular codes, in addition to short cycles, variable nodes with low degree and cycles with low approximate cycle extrinsic message degree (ACE) are also used as the initial inputs. The initial inputs are then expanded recursively to dominant trapping sets of increasing size.
At the core of the algorithm lies the analysis of the graphical structure
of dominant trapping sets and the relationship of such structures to short cycles, low-degree variable nodes and cycles with low ACE.
The algorithm is universal in the sense that it can be used for an arbitrary graph and
that it can be tailored to find a variety of graphical objects, such as absorbing sets and Zyablov-Pinsker (ZP)
trapping sets, known to dominate the performance of LDPC codes in the error floor region over
different channels and for different iterative decoding algorithms.
Simulation results on several LDPC codes demonstrate the
accuracy and efficiency of the proposed algorithm. In particular,
the algorithm is significantly faster than the existing search algorithms
for dominant trapping sets.
\end{abstract}


\section{Introduction}

\IEEEPARstart{E}{stimating} the error floor performance of low-density parity-check (LDPC) codes
under iterative message-passing decoding, and the design of LDPC codes with low error floors have
attracted a great amount of interest in recent years.
The performance of LDPC codes under iterative decoding
algorithms in the error floor region is closely related to the structure of the code's
Tanner graph. For the binary erasure channel (BEC), the problematic structures are
\textit{stopping sets }\cite{Di2002}. In the case of the binary
symmetric channel (BSC) and the additive white Gaussian noise (AWGN) channel,
the error-prone patterns are called \textit{trapping sets} \cite{Richardson2003},
\textit{near codewords} \cite{Mackay2002} or \textit{pseudo codewords} \cite{Vontobel2005}.
Among the trapping sets, the so-called \textit{elementary trapping sets} are
shown to be the main
culprits~\cite{Richardson2003}, \cite{Laendner2005}, \cite{Cole2008}, \cite{Milenkovict2007}, \cite{Laendner2009},
\cite{Zhang2009}.
Related to this, it is demonstrated in~\cite{Dolecek2007} that for some structured LDPC
codes decoded by iterative algorithms over the AWGN channel, a subset of trapping sets,
called \textit{absorbing sets}, determine the error floor performance.
In fact, in an overwhelmingly large number of cases, dominant absorbing sets
appear to be elementary trapping sets.

For a given LDPC code, the knowledge of dominant trapping sets is important. On one hand, efficient methods for estimating the error floor of an LDPC code, which rely on the importance sampling technique, operate by biasing the noise toward the dominant trapping sets of the code, see, e.g., \cite{Cole2008}. On the other hand, by knowing the dominant trapping sets, several decoder modifications can be applied to improve the error floor performance (see, e.g., \cite{Cavus2005}, \cite{Han2008}). Furthermore, the knowledge of dominant trapping sets can be used to design LDPC codes with low error floor.  Related to this, Ivkovic \emph{et al.} \cite{Ivkovic2008} applied the technique of
edge swapping between two copies of a base LDPC code to eliminate the dominant trapping sets of the base code over the BSC. This was then generalized by Asvadi \emph{et al.} \cite{Asvadi2010} to cyclic liftings of higher degree to construct quasi-cyclic LDPC codes with low error floor. While the knowledge of the problematic sets that dominate the
error floor performance is most helpful in the design and analysis of LDPC codes, attaining such knowledge, regardless of differences in the graphical structure of these sets, is a hard problem. For instance, it was shown in ~\cite{Murali2006}, \cite{Krishnan2007}, \cite{McGregor2008} that the problem of finding a minimum size stopping set is NP hard. McGregor and Milenkovic \cite{McGregor2008} showed that not only the problem of finding a  minimum size trapping set, but also the problem of approximating the size of a minimal trapping set is NP hard, regardless of the sparsity of the underlying graph.

It should be noted that while the majority of the literature on estimating the error floor of LDPC codes rely on finding the dominant trapping sets, as an eventual result of decoder failure, in \cite{XB-07}, \cite{Xiao2009}, \cite{XB-08}, Xiao and Banihashemi took a different approach. By focussing on the input error patterns that cause the decoder to fail, they developed a simple technique to estimate the frame error rate (FER) and the bit error rate (BER) of finite-length LDPC codes over the BSC \cite{XB-07}. The complexity of this technique was then reduced in \cite{Xiao2009}, and the estimation technique was extended to the AWGN channel with
quantized output in \cite{XB-08}. In this work, unlike \cite{XB-07}, \cite{Xiao2009}, \cite{XB-08}, we are particularly interested in the examination of the graphical structure of the problematic sets which dominate the error floor performance. This information is then used to efficiently search for these sets.

The complexity of the exhaustive brute force search method for
finding problematic structures of size $t$ in a code of length $n$ becomes quickly infeasible as $n$ and $t$ increase.
Efficient search algorithms have been devised to find
small (dominant) stopping and trapping sets~\cite{Wang2007}, \cite{Cole2008}, \cite{Rosnes2009}, \cite{Xiao2009}, \cite{Wang2009}, \cite{Abu2010}, \cite{Kyung2010}.
The reach of these algorithms however is still very limited. For example,
the complexity of the algorithm of~\cite{Wang2007}, \cite{Wang2009} is only affordable for codes
with lengths up to $\sim 500$. Even for these lengths, the algorithm can
only find trapping sets of maximum size $11$ with only one or two
unsatisfied check nodes. This is while for many codes, some of the dominant trapping sets
may have larger size and/or more than two unsatisfied check nodes. In \cite{Vasic2009} and \cite{Nguyen2011}, the authors proposed to build a database of all possible configurations for trapping sets of different sizes in a graph with specific degree distribution and girth. They then used a parent-child relationship between the trapping sets of different sizes to simplify the search of the larger trapping sets. This method was used to find the dominant trapping sets of left regular LDPC codes with left degree $3$ \cite{Vasic2009}. The method proposed in \cite{Vasic2009} however becomes very complex when the degree of variable nodes, and in turn the number of possible configurations increases. The application of this method becomes even more difficult when dealing with irregular LDPC codes as for such codes, there may be a large number of possible configurations for each type of trapping set, due to the variety of variable node degrees. Even for the left regular graphs with small left degrees, the number of possible configurations becomes quite large for the larger trapping sets. It is therefore important
to look for more efficient algorithms to find the problematic structures
that dominate the error floor performance of LDPC codes.

In this paper, we study the problematic graphical structures that dominate the error floor
performance of LDPC codes, collectively referred to as trapping sets, and demonstrate that they
all contain at least one short cycle (with a small exception of some of the trapping sets of irregular LDPC codes with degree-2 variable nodes). By examining the relationships between cycles and
trapping sets, we devise an efficient algorithm to find dominant trapping sets
of an LDPC code. The algorithm is initiated by a set of short cycles as input.
Each cycle is then expanded recursively to trapping sets of increasing size in a conservative fashion,
i.e., the expanded sets all have the smallest size larger than
the size of the current set, and each of them will be used as a new input to
the next step of the algorithm. It should be mentioned that although our algorithm uses the topological relationships between the small trapping sets and the larger ones, it is different from the method of \cite{Vasic2009} in several ways. Our algorithm is not based on the knowledge of the exact structure of trapping sets, and hence does not need to build a database. In fact, instead of checking all the possible configurations to find the existing ones in a graph, it directly and efficiently finds those existing configurations, and so it is much faster. Moreover, unlike the method of \cite{Vasic2009}, our algorithm uses a general framework for all the degree distributions and girths (with a small exception of some of the trapping sets of irregular LDPC codes with degree-2 variable nodes).
The proposed algorithm is applicable to any Tanner graph
(structured or random) and can be tailored to find a variety of graphical
structures, such as elementary trapping sets and absorbing sets among others. For structured graphs, such as those of quasi-cyclic or protograph codes, one can use the existing automorphisms in the graph to further simplify the search.
Results on several LDPC codes verify the high efficiency and accuracy of the algorithm.
For example, for the tested codes, the search speed is improved by a factor
of 10 to 100 compared to the methods of \cite{Abu2010} and \cite{Cole2008}.

The remainder of this paper is organized as follows. Basic definitions and notations are provided in Section II. In Section III, we develop the proposed algorithm. Section IV presents the modification needed for irregular LDPC codes. Section V includes some numerical results. Section VI concludes the paper.

\section{Definitions and Notations}

Let $G=(L\cup R\,,E)$ be the bipartite graph, or Tanner graph, corresponding to the
LDPC code $\mathcal{C}$,
where $L$ is the set of variable nodes, $R$ is the set of check nodes and $E$ is the set of edges. The notations $L$ and $R$ refer to ``left" and  ``right", respectively, pointing to the side of the bipartite graph where variable nodes and check nodes are located, respectively.
The degree of a node $v \in L\,\, (or\,\, R)$ is
denoted by $d(v)\,$. For a subset $\mathcal{S} \subset L\,$, $\Gamma(\mathcal{S})$ denotes the set of neighbors of $\mathcal{S} $ in $R\,$.
The \textit{induced subgraph} of $\mathcal{S}$, represented by $G(\mathcal{S})$, is the graph containing nodes $\mathcal{S}\cup \Gamma(\mathcal{S})$ with
edges $\{ (u,v) \in E: u \in \mathcal{S} ,\, v\in \Gamma (\mathcal{S})  \}$. The set of check nodes in $\Gamma(\mathcal{S})$ with odd
degree in $G(\mathcal{S})$ is denoted by $\Gamma_{\mathrm{o}}(\mathcal{S})$. Similarly, $\Gamma_{\mathrm{e}}(\mathcal{S})$ represents the set of check nodes in $\Gamma(\mathcal{S})$ with even degree in $G(\mathcal{S})$.
The subgraph resulting from removing the nodes of $\Gamma_{\mathrm{o}}(\mathcal{S})$ and their edges from $G(\mathcal{S})$ is denoted by $G'(\mathcal{S})$.
In this paper, we interchangeably use the terms \textit{satisfied check nodes} and
\textit{unsatisfied check nodes} to denote the check nodes in $\Gamma_{\mathrm{e}}(\mathcal{S})$ and $\Gamma_{\mathrm{o}}(\mathcal{S})$, respectively.
Given a Tanner graph $G=(L\cup R\,,E)$, the following objects play an important
role in the error floor performance of the corresponding LDPC code:
\begin{definition}
\begin{itemize}
\item []
\item [i)] A set $\mathcal{S} \subset L$ is an $(a,b)$ {\em trapping set} if $|\mathcal{S}|=a$ and $|\Gamma_{\mathrm{o}}(\mathcal{S})|=b$. The integer $a$ is referred to as the {\em size} of the trapping set $\mathcal{S}$.
    \label{def1}
\item [ii)] An $(a,b)$ trapping set $\mathcal{S}$ is called {\em elementary} if all the check nodes in $G(\mathcal{S})$ have degree one or two.\label{def2}
\item [iii)] A set $\mathcal{S} \subset L$ is an $(a,b)$ \emph{absorbing set} if $\mathcal{S}$ is an $(a,b)$ trapping set
and if all the nodes in $\mathcal{S}$ are connected to more nodes in $\Gamma_{\mathrm{e}}(\mathcal{S})$ than to nodes in $\Gamma_{\mathrm{o}}(\mathcal{S})$.
\label{def4}
\item [iv)] A set $\mathcal{S} \subset L$ is an $(a,b)$ \emph{fully absorbing set} if $\mathcal{S}$ is an $(a,b)$ absorbing set and if all the nodes in $L\backslash \mathcal{S}$ have strictly more neighbors in $R\backslash \Gamma_{\mathrm{o}}(\mathcal{S})$ than in $\Gamma_{\mathrm{o}}(\mathcal{S})$.
\item [v)] A set $\mathcal{S} \subset L$ is a \emph{$k$-out trapping} set~\cite{Wang2007} if $\Gamma_{\mathrm{o}}(\mathcal{S})$ contains exactly
$k$ nodes of degree one in $G(\mathcal{S})$.
\label{def5}
\item [vi)] Let $G=(L\cup R\,,E)$ be a left-regular bipartite graph with left degree $l$.
A set $\mathcal{S} \subset L$ is a \textit{Zyablov-Pinsker (ZP) trapping set}~\cite{McGregor2008}
if every node of $\mathcal{S}$ is connected to less than $l-\lfloor (l-1)/2\rfloor$ nodes in $\Gamma_{\mathrm{o}}(\mathcal{S})$.
\label{def6}
\end{itemize}
\end{definition}


The ZP trapping sets are the trapping sets of the Zyablov-Pinsker bit-flipping algorithm~\cite{ZP-76}
over the BSC~\cite{McGregor2008}. It should also be noted that for odd values of
$l$, the definitions of ZP trapping sets and absorbing sets are identical.

It is important to note that Definitions~\ref{def2}(ii) -- \ref{def6}(vi) are all special cases of
a trapping set in Definition~\ref{def1}(i). In the rest of the paper, therefore, we
collectively refer to them as trapping sets. Distinctions will be made as necessary. Trapping sets with smaller values of $a$ and $b$ are generally believed to be more harmful to iterative decoding. Loosely speaking, such trapping sets are called {\em dominant}. To measure how harmful a trapping set really is, one can use techniques such as importance sampling \cite{Cole2008} to measure the contribution of the trapping set to the error floor. This contribution and the dominance of a trapping set (compared to others) would also depend on the channel model and the iterative decoding algorithm, as well as the detailed structure of the Tanner graph (not just the values of $a$ and $b$).

In a graph $G=(V,E)$ with the set of nodes $V$ and the set of edges $E$, a \emph{lollipop walk} of
length $k$ is defined as a sequence of nodes $v_1, v_2, .\,.\,.\,, v_{k+1}$ in $V$
such that $v_1, v_2, .\,.\,.\,, v_{k}$ are distinct, $v_{k+1}=v_{m}$ for some $m \in [2,k]$, and
$(v_i, v_{i+1}) \in E$ for all $i \in \{1,\,.\,.\,.,k\}$.
Fig.~\ref{fig1} shows two lollipop walks of length $7$.
The lollipop walk in Fig.~\ref{fig1}$(a)$ is represented as  $v_1 v_2 v_3 v_4 v_5 v_6 v_7 v_2$.
A {\em cycle} can be considered as a special lollipop walk if the definition is extended
to $m=1$. The length of the shortest cycle in a graph $G$ is denoted by $g$ and
is called the {\em girth} of $G$.

\begin{figure}[!t]
\centering
\includegraphics[width=2.8in]{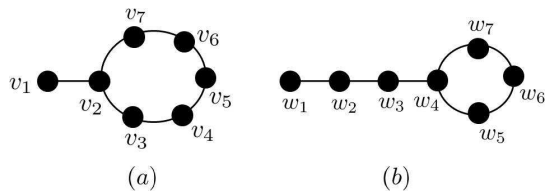}
\caption{Two lollipop walks of length $7$.}
\label{fig1}
\end{figure}

\section{Development of the Proposed Algorithm}

\subsection{Graphical Structure of Trapping Sets}
Without loss of generality, we assume that the induced subgraph of a trapping set is connected.
Disconnected trapping sets can be considered as the union of connected ones. Moreover, to the best of our knowledge, almost all the structures
reported as dominant trapping sets (of regular LDPC codes) in the literature have the property that every variable node is
connected to at least two satisfied check nodes in the induced subgraph. We thus focus on trapping sets with this property except for irregular LDPC codes, where we relax this condition for degree-2 variable nodes. As an example, the subgraph in
Fig.~\ref{fig2} does not satisfy this condition. (Variable nodes and check nodes are represented
by circles and squares, respectively.) Removal of node $v_1$
and its edges however makes the subgraph satisfy the condition. The following lemma proves this property for certain absorbing and ZP trapping sets.
\begin{lemma}
Suppose that $\mathcal{S} \subset L$ is an absorbing set (ZP trapping set) in $G=(L\cup R,E)$,
and that for all variable nodes $v \in \mathcal{S}$, we have $d(v)\ge 2$ ($d(v)\ge 3$). Then each variable node
$v \in \mathcal{S}$ is connected to at least two satisfied check nodes in $G(\mathcal{S})$.
\label{lem1}
\end{lemma}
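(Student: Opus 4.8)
The plan is to reduce the statement to an elementary per-node counting argument and then handle the two cases separately. First I would fix an arbitrary variable node $v\in\mathcal{S}$ and record the structural fact that, since $v$ belongs to $\mathcal{S}$, every check node adjacent to $v$ lies in $\Gamma(\mathcal{S})$; hence $v$ retains all of its edges in the induced subgraph $G(\mathcal{S})$ and its degree there equals $d(v)$. Each such neighbor is either satisfied (in $\Gamma_{\mathrm{e}}(\mathcal{S})$) or unsatisfied (in $\Gamma_{\mathrm{o}}(\mathcal{S})$), so letting $e(v)$ and $o(v)$ denote the numbers of satisfied and unsatisfied check neighbors of $v$ in $G(\mathcal{S})$, I have $e(v)+o(v)=d(v)$. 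The whole lemma then amounts to showing $e(v)\ge 2$ in each case.

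For the absorbing set, the defining property is $e(v)>o(v)$ for every $v\in\mathcal{S}$. Combining this with $e(v)+o(v)=d(v)\ge 2$ gives $2e(v)>e(v)+o(v)=d(v)\ge 2$, so $e(v)>1$, and integrality forces $e(v)\ge 2$. This settles the absorbing case in one line and uses the hypothesis $d(v)\ge 2$ precisely where it is needed, namely to exclude a degree-$1$ node (for which $e(v)=1,\ o(v)=0$ would satisfy the absorbing inequality yet fail the conclusion).

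For the ZP case, the graph is left-regular with left degree $l$, so $d(v)=l$ and the hypothesis $d(v)\ge 3$ reads $l\ge 3$. Here I would convert the defining bound $o(v)<l-\lfloor (l-1)/2\rfloor$ into a lower bound on $e(v)=l-o(v)$, which requires evaluating the threshold $l-\lfloor (l-1)/2\rfloor$ according to the parity of $l$. This yields $e(v)\ge (l+1)/2$ for odd $l$ and $e(v)\ge l/2$ for even $l$; in both branches the right-hand side is at least $2$ as soon as $l\ge 3$ (an even $l\ge 3$ being in fact $\ge 4$). As a sanity check I would verify that these thresholds are consistent with the remark preceding the lemma that the ZP and absorbing conditions coincide for odd $l$.

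The underlying computation is routine, so I do not anticipate a genuine obstacle; the only delicate point is the floor-function arithmetic and parity split in the ZP case, where one must confirm that the threshold degree of $3$ (rather than $2$) is exactly what is required to push the even branch up to $e(v)\ge 2$. I would also state explicitly at the start that $v\in\mathcal{S}$ forces all neighbors of $v$ into $\Gamma(\mathcal{S})$, since this is what makes the counts $e(v)$ and $o(v)$ meaningful relative to $G(\mathcal{S})$ and is easy to pass over silently.
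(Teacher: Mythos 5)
Your argument is correct and is precisely the routine definition-unpacking that the paper compresses into its one-line proof ("follows from the definition of absorbing and ZP trapping sets"): the inequality $2e(v)>d(v)\ge 2$ for absorbing sets and the bound $e(v)\ge \lfloor (l-1)/2\rfloor+1\ge 2$ for ZP sets with $l\ge 3$ are exactly what that remark asserts. No substantive difference in approach; your version simply supplies the parity bookkeeping the authors omit.
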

\begin{proof}
The proof follows from the definition of absorbing and ZP trapping sets.
\end{proof}

For small trapping sets, which dominate the error floor performance, it is unlikely to see check nodes of degree larger than $2$ in their
subgraphs, i.e., most of the dominant trapping sets are elementary~\cite{Cole2008},
\cite{Richardson2003}. Related to this, almost all the trapping sets reported as the dominant trapping sets of practical LDPC codes are elementary. In fact, it can be shown that the sizes of non-elementary trapping sets for left-regular graphs are generally larger than those of elementary ones (cf.~Lemma \ref{lem_nonel} in Appendix A).
 \begin{example}
 For left-regular LDPC codes with left degree 4 and girths 6, 8 and 10, lower bounds on the sizes of non-elementary trapping sets $\mathcal{S}$ with less than 3 unsatisfied check nodes and with at least one satisfied check node of degree larger than 2 in $G(\mathcal{S})$, are 7, 14 and 22, respectively (based on Lemma \ref{lem_nonel} in Appendix A, and by choosing $b = 2$). Moreover, for the same conditions, the minimum sizes of non-elementary trapping sets $\mathcal{S}$ with at least one unsatisfied check node of degree larger than 1 in $G(\mathcal{S})$ (and without satisfied check nodes of degree larger than 2 in $G(\mathcal{S})$) are at least 5, 11 and 17, respectively. This is while for the same scenario, the code can have elementary trapping sets of size 5, 8 and 17, respectively.
 \end {example}

 In the following, we develop our search algorithm mainly for elementary trapping sets, and then present simple modifications to tailor the algorithm to find non-elementary trapping sets.

\begin{figure}[!t]
\centering
\includegraphics[width=1.6in]{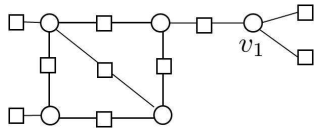}
\caption{The induced subgraph of a trapping set.}
\label{fig2}
\end{figure}

In the rest of the paper, we use the notation $\mathcal{T}$ to denote the set of all trapping sets $\mathcal{S} $ in a graph
$G$ whose induced subgraph $G(\mathcal{S})$ is connected and for which every node $v \in \mathcal{S}$ is connected to at least
two nodes in $\Gamma_{\mathrm{e}}(\mathcal{S})$. Notation $\mathcal{T}^a$ is used for the set of
all elements in $\mathcal{T}$ with size $a$ and $\mathcal{T}_{\mathcal{S}}$ denotes the set of all
elements in $\mathcal{T}$ that contain the set $\mathcal{S}$. Naturally, $\mathcal{T}_{\mathcal{S}}^a$ denotes
the set of all elements in $\mathcal{T}$ of size $a$ that contain the set $\mathcal{S}$. In
the following, we also assume that
the Tanner graph $G$ has no parallel edges and no node of degree less than 2.

\subsection{Expansion of Elementary Trapping Sets}
The main idea of the proposed algorithm is to start from a relatively small set of
small elementary trapping sets, which are easy to enumerate, and then recursively expand them
to larger elementary trapping sets. To achieve this, we first characterize the expansion
of an elementary trapping set to a larger elementary trapping set through the following lemmas.

\begin{lemma}
Let $\mathcal{S}$ be an elementary trapping set of size $a$ in $\mathcal{T}$. Then for each elementary trapping
set $\mathcal{S}' \in \mathcal{T}_{\mathcal{S}}^{a+1}$ (if any), the variable node in $\mathcal{S}'\backslash \mathcal{S}$ is only connected
to unsatisfied check nodes of $\mathcal{S}$ (i.e., to the check nodes in $\Gamma_{\mathrm{o}}(\mathcal{S})$).
\label{lem2}
\end{lemma}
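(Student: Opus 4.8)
The plan is a short proof by contradiction resting entirely on the degree constraint that defines an elementary trapping set. First I would translate the words ``satisfied'' and ``unsatisfied'' into concrete degrees for the elementary set $\mathcal{S}$: since every check node of $G(\mathcal{S})$ has degree one or two, the unsatisfied nodes $\Gamma_{\mathrm{o}}(\mathcal{S})$ are exactly the degree-one check nodes and the satisfied nodes $\Gamma_{\mathrm{e}}(\mathcal{S})$ are exactly the degree-two check nodes. Writing $v$ for the unique variable node of $\mathcal{S}'\backslash\mathcal{S}$, every check node of $\Gamma(\mathcal{S})$ adjacent to $v$ lies either in $\Gamma_{\mathrm{o}}(\mathcal{S})$ or in $\Gamma_{\mathrm{e}}(\mathcal{S})$, and the claim amounts to excluding the second case.

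Next I would suppose, for contradiction, that $v$ is adjacent to some $c\in\Gamma_{\mathrm{e}}(\mathcal{S})$. By the first step $c$ already has degree two in $G(\mathcal{S})$, being joined to two variable nodes of $\mathcal{S}$. Because $\mathcal{S}\subset\mathcal{S}'$, both of these edges survive in $G(\mathcal{S}')$ while the edge $(v,c)$ is added, so $c$ has degree three in $G(\mathcal{S}')$. This contradicts the hypothesis that $\mathcal{S}'$ is elementary. Hence $v$ is adjacent to no node of $\Gamma_{\mathrm{e}}(\mathcal{S})$, i.e.\ every edge joining $v$ to a check node of $\mathcal{S}$ terminates in $\Gamma_{\mathrm{o}}(\mathcal{S})$, which is the assertion.

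I do not expect a genuine obstacle, as the lemma is an immediate consequence of the degree bound; the only points needing care are bookkeeping. One must read off that an even-degree check node of an elementary set has degree exactly two, so that a single additional incident edge already destroys elementariness, and one must note that the edges of $G(\mathcal{S})$ persist in $G(\mathcal{S}')$ precisely because $\mathcal{S}\subseteq\mathcal{S}'$. I would also observe that the argument invokes only elementariness of $\mathcal{S}$ and $\mathcal{S}'$ together with the size-$(a+1)$ containment; the defining property of $\mathcal{T}$ (every variable node meeting at least two satisfied checks) is not used. Finally, the statement constrains only the edges of $v$ into $\Gamma(\mathcal{S})$; $v$ may still be incident to check nodes outside $\Gamma(\mathcal{S})$, each of which simply becomes a new degree-one, hence unsatisfied, check node of $\mathcal{S}'$.
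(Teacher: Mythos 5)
Your proof is correct and follows essentially the same route as the paper's: both suppose the new variable node meets a satisfied check node of $\mathcal{S}$, observe that such a check node already has degree two in $G(\mathcal{S})$ by elementariness, and conclude it would acquire degree three in $G(\mathcal{S}')$, contradicting $\mathcal{S}'$ being elementary. Your write-up is merely more explicit about the bookkeeping (degree-two reading of ``satisfied,'' persistence of edges under $\mathcal{S}\subseteq\mathcal{S}'$) than the paper's two-line argument.
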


\begin{proof}
If the node in $\mathcal{S}'\backslash \mathcal{S}$ is connected to any satisfied check nodes of $\mathcal{S}$,
then $\mathcal{S}'$ will have unsatisfied check nodes in $\Gamma_{\mathrm{o}}(\mathcal{S}')$ connected to
$3$ variable nodes of $\mathcal{S}'$. This contradicts $\mathcal{S}'$ being an elementary trapping set.
\end{proof}
Fig.~\ref{fig3}$(a)$ depicts an example of the set $\mathcal{S}'$ discussed in Lemma~\ref{lem2}. (It should be noted that in all the configurations of Fig.~\ref{fig3}, including \ref{fig3}$(a)$, unsatisfied check nodes of $G({\mathcal{S}')}$ are not shown.)

\begin{figure}[!t]
\centering
\includegraphics[width=2.6in]{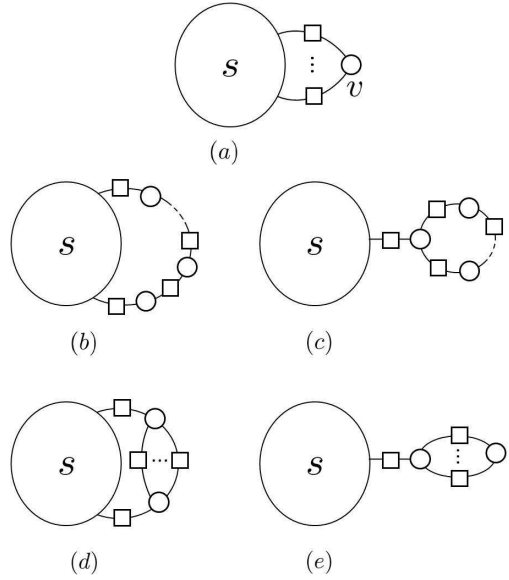}
\caption{Possible expansions of an elementary trapping set $\mathcal{S}$ to a larger elementary trapping set $\mathcal{S}'$. (Unsatisfied check nodes of $G({\mathcal{S}'})$ are not shown.)}
\label{fig3}
\end{figure}

\begin{lemma}
Suppose that
$\ A=\{a_1,...,a_i,a_{i+1},...,a_I\}$ is the sorted
set of sizes of the elementary trapping sets in $\mathcal{T}$ in increasing order. Let $\mathcal{S}$ be an elementary trapping set of size $a_i$ in $\mathcal{T}$. If $a_{i+1} > a_i+2$,
then for each elementary trapping set $\mathcal{S}'\in \mathcal{T}_{\mathcal{S}}^{a_{i+1}}$ (if any),
the set $\mathcal{S}'\backslash \mathcal{S}$ is connected to zero satisfied
check nodes of $\mathcal{S}$ (i.e., nodes in $\Gamma_{\mathrm{e}}(\mathcal{S})$) and to only one or two unsatisfied
check nodes of $\mathcal{S}$ (i.e., nodes in $\Gamma_{\mathrm{o}}(\mathcal{S})$). If the set $\mathcal{S}'\backslash \mathcal{S}$ is connected to
two nodes in $\Gamma_{\mathrm{o}}(\mathcal{S})$, then there is no cycle in $G({\mathcal{S}'\backslash \mathcal{S}})$.
If the set $\mathcal{S}'\backslash \mathcal{S}$ is connected to only one node in $\Gamma_{\mathrm{o}}(\mathcal{S})$, then
there is exactly one cycle in $G({\mathcal{S}'\backslash \mathcal{S}})$.
\label{lem3}
\end{lemma}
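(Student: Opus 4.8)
The plan is to exploit the fact that $a_{i+1}$ is the \emph{next} realized size after $a_i$, so no member of $\mathcal{T}$ has size strictly between $a_i$ and $a_{i+1}$. Writing $D:=\mathcal{S}'\setminus\mathcal{S}$, I would repeatedly argue that any "too rich" attachment of $D$ to $\mathcal{S}$ lets me peel off a proper nonempty $D'\subsetneq D$ with $\mathcal{S}\cup D'\in\mathcal{T}$, whose size then lies in the open interval $(a_i,a_{i+1})$, a contradiction. First, the statement about satisfied check nodes is disposed of exactly as in Lemma~\ref{lem2}: since $\mathcal{S}'$ is elementary, every check node of $G(\mathcal{S}')$ has degree at most $2$, and a satisfied check node of $\mathcal{S}$ already has degree $2$ in $G(\mathcal{S})$, so an edge from $D$ to it would force degree $3$, contradicting elementarity. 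Hence $D$ attaches to $\mathcal{S}$ only through former degree-one check nodes of $\mathcal{S}$, each becoming a degree-two \emph{bridge} joining exactly one node of $\mathcal{S}$ to exactly one node of $D$. Let $k$ be the number of such bridges (the number of nodes of $\Gamma_{\mathrm{o}}(\mathcal{S})$ met by $D$); connectivity of $G(\mathcal{S}')$ forces $k\ge 1$.

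Next I would record the exact peeling criterion: for $\emptyset\neq D'\subseteq D$, one has $\mathcal{S}\cup D'\in\mathcal{T}$ iff $G(D')$ is connected, $D'$ meets at least one bridge, and in $\mathcal{S}\cup D'$ every non-bridge node of $D'$ keeps at least two satisfied checks while every bridge-node keeps at least one (the nodes of $\mathcal{S}$ automatically retain their two satisfied checks, which $D'$ never touches, and elementarity is inherited). A preliminary observation is that no node of $D$ can carry two bridges, since such a node alone would be peelable of size $a_i+1$; thus the $k$ bridges meet $k$ \emph{distinct} nodes of $D$.

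With this criterion I would run three reductions. (a) If $G(D)$ were disconnected, one connected component already satisfies the criterion (its internal degree-two checks have both endpoints inside the same component), giving a peelable proper subset; hence $G(D)$ is connected. (b) If $k\ge 3$, take a \emph{closest} pair of bridge-nodes $b_p,b_q$ and a shortest path between them in the graph $H^{\ast}$ obtained by suppressing the degree-two internal checks; this path is chordless, its interior is bridge-free, and it omits the third bridge-node, so its vertex set is a peelable proper subset. Therefore $k\in\{1,2\}$.

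Finally, for the cycle counts I would pass to the cyclomatic number $\mu(\cdot)=|E|-|V|+(\text{number of components})$. Adjoining one vertex $\ast$ (standing for $\mathcal{S}$) joined to the $k$ bridge-nodes yields a connected graph $\hat H$ with $\mu(G(D))=\mu(H^{\ast})=\mu(\hat H)-k+1$, in which every node of $D$ has degree at least $2$. A configuration minimal for the peeling criterion must be unicyclic, i.e.\ $\mu(\hat H)=1$: if $\mu(\hat H)\ge 2$ there is a redundant cycle, and trimming a chord or ear (handling the theta-subgraph case of the $2$-core explicitly, and retaining a path back to the unique bridge-node when $k=1$) again exhibits a peelable proper subset. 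Then $\mu(\hat H)=1$ gives $\mu(G(D))=0$ for $k=2$ (no cycle) and $\mu(G(D))=1$ for $k=1$ (exactly one cycle). I expect the genuine obstacle to be this last step --- establishing $\mu\le 1$ rather than merely $\mu\ge 1$ --- because it requires the delicate trimming argument through the $2$-core; a secondary technical point is to exclude parallel internal checks by invoking girth $\ge 6$, since two variable nodes sharing two checks would create a short cycle that slips past the peeling criterion.
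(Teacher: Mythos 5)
Your proposal is correct and follows essentially the same strategy as the paper's proof: every claim is reduced to peeling off a proper nonempty $D'\subsetneq D$ with $\mathcal{S}\cup D'\in\mathcal{T}$ whose size would land in the forbidden gap $(a_i,a_{i+1})$, and your arguments for the satisfied check nodes, for no node of $D$ carrying two bridges, for connectivity of $G(\mathcal{S}'\setminus\mathcal{S})$, and for $k\le 2$ via the closest pair of bridge nodes all coincide with the paper's. The only divergence is the last step, where the paper sidesteps both the $2$-core/theta-graph trimming you flag as the main obstacle and the girth-$6$ assumption you would invoke (which the lemma does not grant) by arguing directly that the vertex set of a shortest path between the two bridge nodes (for $k=2$), or of a shortest lollipop walk from the single bridge node (for $k=1$), is always peelable and hence must exhaust $D$, which immediately forces $G(\mathcal{S}'\setminus\mathcal{S})$ to be a chordless path (no cycle) or a lollipop (exactly one cycle), respectively.
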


\begin{proof}
If any variable node in $\mathcal{S}'\backslash \mathcal{S}$ is connected to $\Gamma_{\mathrm{e}}(\mathcal{S})$, then $G({\mathcal{S}')}$ contains
satisfied check nodes of degree $4$ or higher, or unsatisfied check nodes of degree greater than $1$. Both
are in contradiction with $\mathcal{S}'$ being an elementary trapping set.

Since $\mathcal{S}'$ is an elementary trapping set, there cannot be more than one connection between
each node of $\Gamma_{\mathrm{o}}(\mathcal{S})$ and the nodes in $\mathcal{S}'\backslash \mathcal{S}$.
To see this, consider a node $v \in \mathcal{S}'\backslash \mathcal{S}$ which is connected to $\Gamma_{\mathrm{o}}(\mathcal{S})$. Node $v$ can have only
one connection to $\Gamma_{\mathrm{o}}(\mathcal{S})$ because otherwise, all the other nodes in $\mathcal{S}'\backslash \mathcal{S}$ can be
removed and we will end up with an elementary trapping set of size $a_i+1$ in $\mathcal{T}$, which is in contradiction
with the assumption of the lemma.

Now suppose that there are at least 3 connections
between the variable nodes in $\mathcal{S}'\backslash \mathcal{S}$ and $\Gamma_{\mathrm{o}}(\mathcal{S})$. Based on the discussion in the previous paragraph, this means that there are at least $3$ variable nodes in $\mathcal{S}'\backslash \mathcal{S}$ each with a single connection to a different check node in $\Gamma_{\mathrm{o}}(\mathcal{S})$. If $G({\mathcal{S}'\backslash \mathcal{S}})$ is not connected,
then one can remove one of its components and obtain an elementary trapping set of
size smaller than $a_{i+1}$, which results in a contradiction. If $G({\mathcal{S}'\backslash \mathcal{S}})$ is connected, then one can
find the shortest paths in $G({\mathcal{S}'\backslash \mathcal{S}})$ between every two variable nodes of $\mathcal{S}'\backslash \mathcal{S}$
that are connected to $\Gamma_{\mathrm{o}}(\mathcal{S})$, and among them select the one with the least number of nodes. By keeping the nodes on
the selected path and removing all the other nodes in $\mathcal{S}'\backslash \mathcal{S}$, one can then
obtain an elementary trapping set of size smaller than $a_{i+1}$, which is again a contradiction.
We therefore conclude that the number of connections between the variable nodes
in $\mathcal{S}'\backslash \mathcal{S}$ and $\Gamma_{\mathrm{o}}(\mathcal{S})$ must be strictly less than 3.

For the case that $\mathcal{S}'\backslash \mathcal{S}$ is connected to exactly two nodes in $\Gamma_{\mathrm{o}}(\mathcal{S})$,
there must be two different variable nodes $v$ and $v'$ of $\mathcal{S}'\backslash \mathcal{S}$ corresponding to those connections. Also,
there must be no cycles in $G({\mathcal{S}'\backslash \mathcal{S}})$. Otherwise, one can remove all the variable nodes on the cycle except those on the shortest path between $v$ and $v'$, and obtain an elementary trapping set larger than $a_i$ but smaller than $a_{i+1}$.
This contradicts the lemma's assumption. Fig.~\ref{fig3}$(b)$ is an example of the case where $\mathcal{S}'\backslash \mathcal{S}$ is connected to exactly two nodes in $\Gamma_{\mathrm{o}}(\mathcal{S})$.

The proof for the case with one connection is similar and omitted. Fig.~\ref{fig3}$(c)$ is an example of this case, where the expansion of set $\mathcal{S} $ is through a lollipop walk. In both Figs.~\ref{fig3}$(b)$ and $(c)$, the dashed line indicates that more variable and check nodes can be part of the chain.
\end{proof}

\begin{lemma}
Suppose that $\ A=\{a_1,...,a_i,a_{i+1},...,a_I\}$ is the sorted
set of sizes of the elementary trapping sets in $\mathcal{T}$ in increasing order and that $a_{i+1}= a_i+2$. Let $\mathcal{S} $ be an elementary trapping set of size $a_i$ in $\mathcal{T}$.
If the girth of the graph is larger than 4, then for each elementary trapping
set $\mathcal{S}'\in \mathcal{T}_{\mathcal{S}}^{a_{i+1}}$ (if any), the only possible configuration for $G({\mathcal{S}'})$ is that
of Fig.~\ref{fig3}$(b)$, described in Lemma~\ref{lem3}, with only 2 variable nodes in $\mathcal{S}'\backslash \mathcal{S}$. 
If the girth
is 4, then the only possible configurations are those in Figs.~\ref{fig3}$(b)$ (with only 2 variable nodes in $\mathcal{S}'\backslash \mathcal{S}$) \ref{fig3}$(d)$ and \ref{fig3}$(e)$. 
\label{lem4}
\end{lemma}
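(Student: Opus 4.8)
The plan is to treat $\mathcal{S}'$ as $\mathcal{S}$ together with the two variable nodes of $\mathcal{S}'\setminus\mathcal{S}$ (there are exactly two, since $a_{i+1}=a_i+2$), and to classify how these two nodes attach to $\mathcal{S}$ and to each other. The starting observation is that the hypothesis $a_{i+1}=a_i+2$ still guarantees $a_i+1\notin A$, so no element of $\mathcal{T}$ has size $a_i+1$. This is precisely what is needed to recycle the reasoning of Lemmas~\ref{lem2} and~\ref{lem3}: a new node cannot touch any check in $\Gamma_{\mathrm{e}}(\mathcal{S})$ (that would create a satisfied check of degree $\geq 3$ or an unsatisfied check of degree $\geq 2$, contradicting elementarity), and no new node may be joined to two nodes of $\Gamma_{\mathrm{o}}(\mathcal{S})$ (otherwise that single node adjoined to $\mathcal{S}$ would already be an elementary trapping set in $\mathcal{T}$ of the forbidden size $a_i+1$). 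Hence each new node meets $\Gamma_{\mathrm{o}}(\mathcal{S})$ at most once, and all its remaining edges run to brand-new check nodes, which (not being in $\Gamma(\mathcal{S})$) can only be adjacent to the two new variable nodes.

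Next I would count $k$, the number of edges joining $\mathcal{S}'\setminus\mathcal{S}$ to $\Gamma_{\mathrm{o}}(\mathcal{S})$. Connectivity of $G(\mathcal{S}')$ forces $k\geq 1$, and since each of the two new nodes contributes at most one such edge, while each (degree-one) node of $\Gamma_{\mathrm{o}}(\mathcal{S})$ can absorb at most one new edge before exceeding degree two, we get $k\in\{1,2\}$ with the $k=2$ edges landing on two distinct unsatisfied checks. The decisive new ingredient, absent from Lemma~\ref{lem3}, is a girth argument: any cycle of $G(\mathcal{S}'\setminus\mathcal{S})$ alternates variable and check nodes and passes through variable nodes of $\mathcal{S}'\setminus\mathcal{S}$ only; as this set has just two variable nodes, the shortest possible such cycle has length four. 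Therefore $G(\mathcal{S}'\setminus\mathcal{S})$ can contain a cycle only when $g=4$, and then necessarily the $4$-cycle $v_1\,c'\,v_2\,c''$ built from the two new variable nodes and two shared new checks.

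With these constraints the case analysis is short. If $k=2$, each new node already has one satisfied neighbor in $\Gamma_{\mathrm{o}}(\mathcal{S})$, and membership of $\mathcal{S}'$ in $\mathcal{T}$ forces a second satisfied check on each; the only source of a satisfied (degree-two) new check is one shared by both new nodes, producing the chain $c_1\,v_1\,c'\,v_2\,c_2$ of Fig.~\ref{fig3}$(b)$. This is cycle-free and hence valid for every girth, whereas a second shared check would close the $4$-cycle and is admissible only for $g=4$, giving Fig.~\ref{fig3}$(d)$. If $k=1$, only one node, say $v_1$, touches $\Gamma_{\mathrm{o}}(\mathcal{S})$, so the other node $v_2$ must draw both of its required satisfied checks from new checks; since such checks are adjacent only to $v_1$ and $v_2$, this demands two shared checks, i.e.\ a $4$-cycle, possible only for $g=4$ and yielding Fig.~\ref{fig3}$(e)$. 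Collecting the cases, for $g>4$ the sole survivor is Fig.~\ref{fig3}$(b)$ with two new variable nodes, while for $g=4$ the admissible configurations are exactly those of Figs.~\ref{fig3}$(b)$, $(d)$ and $(e)$.

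The step I expect to be the main obstacle is the rigorous use of the $\mathcal{T}$-membership condition (each variable node adjacent to at least two satisfied checks) to pin down the shared-check structure, together with the bookkeeping that excludes the degenerate alternatives — for instance, a new node whose extra edges go only to private degree-one checks fails the two-satisfied-neighbor requirement and must be ruled out. Care is likewise needed to confirm that such private unsatisfied checks, which may legitimately accompany higher-degree variable nodes, do not disturb the essential topology of the figures; they merely swell the unshown unsatisfied checks of $G(\mathcal{S}')$.
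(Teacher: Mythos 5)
Your proof is correct and follows exactly the route the paper intends: the authors omit the proof of this lemma, stating only that it is ``similar to that of Lemma~\ref{lem3},'' and your argument is precisely that — reusing the exclusion of $\Gamma_{\mathrm{e}}(\mathcal{S})$ and the at-most-one-edge-per-new-node bound from Lemmas~\ref{lem2}--\ref{lem3}, then adding the observation that a cycle within the two-node set $\mathcal{S}'\backslash\mathcal{S}$ must be a $4$-cycle, together with the two-satisfied-checks condition to force the shared-check structure of Figs.~\ref{fig3}$(b)$, $(d)$, $(e)$. No gaps; the case analysis over $k\in\{1,2\}$ and the handling of private degree-one checks are complete.
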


\begin{proof}
The proof is similar to that of Lemma~\ref{lem3} and is omitted.
\end{proof}

\subsection{Proposed Algorithm}
The basic idea behind the proposed algorithm is to construct larger elementary trapping sets
by expanding smaller ones. More precisely, given an elementary trapping set $\mathcal{S} $ of size $a_i$ at the input,
the algorithm finds all the elementary trapping sets $\mathcal{S}'$ containing $\mathcal{S} $, with the property
that their size $a_{i+1}$ is the smallest size greater than $a_i$. The algorithm then continues
by using the sets found in the current step as the inputs to the next step and finds the next
set of larger elementary trapping sets. Each step of the algorithm is performed
by using Lemmas~\ref{lem2} - \ref{lem4}. The pseudo-code for one step of the proposed algorithm
is given in Algorithm 1.

\hspace{-.25in}
\linethickness{0.275mm}
\line(1,0){250} \\
\textbf{Algorithm 1}: Expansion of input elementary trapping sets to larger ones of size up to $k$ with the number of unsatisfied check nodes up to $T$ in $G=(L\cup R\,,E)\,$.\\
($\mathcal{L}_{in}$ and $\mathcal{L}_{out}$ are the lists of input and output trapping sets, respectively.)\\
\linethickness{0.125mm}
\line(1,0){250}
\begin{algorithmic}[1]
\STATE  \textbf{Inputs:} $G$ and $\mathcal{L}_{in}$.
\STATE  \textbf{Initialization:} $\mathcal{L}_{out}$ $\leftarrow \emptyset$.

\REPEAT
\STATE Select an element of $\mathcal{L}_{in}$ and denote it as  $t_j$.
\STATE Construct a new graph $G'$ by removing all the nodes in $\Gamma_{\mathrm{e}}(t_j)$ and their neighbors from $G$.
\STATE $i_{max} \leftarrow$ $(k-|t_j|)$ and $\mathcal{G}\leftarrow \emptyset$.
\label{step1}
\FOR {each node $c$ in $\Gamma_{\mathrm{o}}(t_j)$}
\STATE Examine the neighborhood of $c$ in $G'$ one layer at a time and to the maximum of $i_{max}$ layers in search for paths with $i\le i_{max}$ variable nodes between $c$ and the other nodes of $\Gamma_{\mathrm{o}}(t_j)$, and lollipop walks
with $i\le i_{max}$ variable nodes starting from $c$.
\STATE Denote $\mathcal{G}_c$ as the set of all such paths/lollipop walks of shortest length $i$ (if any).
\IF {$i<i_{max}$}
\STATE $i_{max} \leftarrow$ $i$.
\STATE $\mathcal{G}\leftarrow$ $\mathcal{G}_c$.
\ELSE
\STATE $\mathcal{G}\leftarrow \mathcal{G} \cup \mathcal{G}_c$.
\ENDIF
\ENDFOR
\FOR {each element $\mathcal{S} $ in $\mathcal{G}$}
\STATE $t'\leftarrow t_j \cup \mathcal{S}$.
\IF {($t' \notin \mathcal{L}_{out}$) and ($|\Gamma_{\mathrm{o}}(t')|\le T$) }
\label{step2}
\STATE $\mathcal{L}_{out}\leftarrow \mathcal{L}_{out} \cup \{t'\}$.
\ENDIF
\ENDFOR
\UNTIL { all the elements of $\mathcal{L}_{in}$ are selected.}
\STATE  \textbf{Output:} $\mathcal{L}_{out}$.
\end{algorithmic}
\linethickness{0.275mm}
\line(1,0){250}

\begin{remark}
Note that in Line 5 of Algorithm~1, all the satisfied check nodes in $G({t_j})$,
i.e., the set $\Gamma_{\mathrm{e}}(t_j)$, and their neighboring variable nodes are removed from the graph.
This is because, based on Lemmas~\ref{lem2}~-~\ref{lem4}, such nodes cannot be part of the
expansion of an elementary trapping set.
\end{remark}
\begin{remark}
In Line 19 of the algorithm,
the threshold value $T$ on the number of unsatisfied check nodes is needed to
keep the complexity of the overall search algorithm, which involves multiple applications of Algorithm 1, low. A proper choice of $T$ has negligible effect on the ability of the algorithm to find the larger trapping sets $(a,b)$ with small values of $b$. This is explained in the following example.

\begin{example}
\begin{figure}[!h]
\centering
\includegraphics[width=2in]{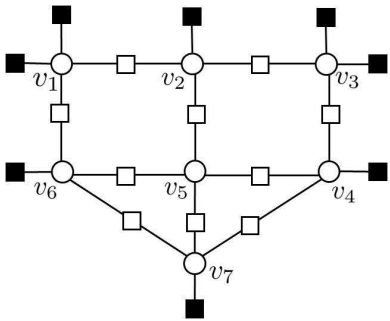}
\caption{An example of a $(7,\,8)$ trapping set (satisfied and unsatisfied check nodes are shown by empty and full squares, respectively)}
\label{fig5}
\end{figure}
Consider the $(7,\,8)$ trapping set $\mathcal{S} $, shown in Fig.~\ref{fig5}. This set belongs to $\mathcal{T}$ and contains $4$ trapping sets of size $6$, all also in $\mathcal{T}$.
These four trapping sets can be each obtained
by removing one of the nodes $v_1$, $v_3$, $v_5$ and $v_7$. As a result, we have $(6,\,8)$, $(6,\,8)$, $(6,\,12)$ and $(6,\,10)$
trapping sets, respectively. Among these trapping sets, the $(6,\,8)$ ones have a smaller number of unsatisfied check nodes.
Starting from each of these two trapping sets, Algorithm $1$ finds $\mathcal{S} $.
Hence, ignoring the $(6,\,12)$ (or even $(6,\,12)$ and $(6,\,10)$) trapping set(s) does not impair the ability of the algorithm to find $\mathcal{S} $.
\end{example}

\end{remark}
\begin{remark}
Based on Lemmas~\ref{lem2}~--~\ref{lem4}, it can be proved that starting
from an $(a,b)$ elementary trapping set $\mathcal{S} $, Algorithm~1 will find all
the $(a',b')$ elementary trapping sets of the smallest size $a'$ larger than $a$
that contain $\mathcal{S} $ (this requires the removal of the condition $|\Gamma_{\mathrm{o}}(t')|\le T$ in Line 19).
Note that this does not imply that by the recursive
application of Algorithm~1 one can obtain all the elementary trapping sets
containing $\mathcal{S} $. The following example demonstrates this.

\begin{example}
Consider the $(6,6)$ elementary trapping
set ${\cal S}'=\{v_1,\,v_2,\,v_3,\,v_4,\,v_5,\,v_6\}$ in Fig.~\ref{fig4_1}. Assume that Algorithm 1
starts from the elementary trapping set ${\cal S} =\{v_1,\,v_2,\,v_6\}$. Using this input, the output of the algorithm
is $\{v_1,\,v_2,\,v_6,\,v_5\}$. By subsequent applications of the algorithm, the next outputs are $\{v_1,\,v_2,\,v_6,\,v_5,\,v_7\}$
and $\{v_1,\,v_2,\,v_6,\,v_5,\,v_7,\,v_3,\,v_4\}$, respectively. This means that
the algorithm does not find the trapping set ${\cal S}'$, although ${\cal S}'$ contains ${\cal S}$. (It is however easy to see that if the algorithm starts from the set $\{v_2,\,v_3,\,v_4,\,v_5\}$, it will find ${\cal S}'$.)
\begin{figure}[!h]
\centering
\includegraphics[width=2in]{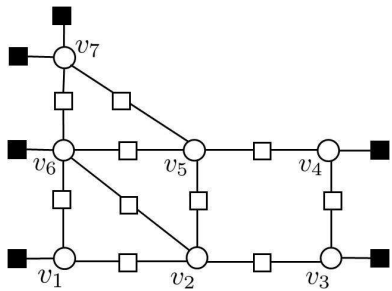}
\caption{An example explaining that the algorithm cannot find all the elementary trapping sets containing a specific elementary trapping set }
\label{fig4_1}
\end{figure}
\label{EXample3}
\end{example}
In fact, the sufficient condition for the
algorithm to find a trapping set $\mathcal{S}'$ of size $a_j$, starting with one of its subsets $\mathcal{S} $ of size $a_i<a_j$, is that $\mathcal{S}'$ has at least
one subset in $\mathcal{T}_{\mathcal{S}}^a$ for all $a\in A, \,\,a_i<a<a_j$, where $A$ is defined in Lemma~\ref{lem3}.
\label{remark3}
\end{remark}

The following example shows that despite the limitation explained in Remark~\ref{remark3} and Example~\ref{EXample3}, for many cases, the proposed algorithm in fact finds (in a guaranteed fashion) \emph{all} the trapping sets $(a,b)$ with $a$ and $b$ up to certain values.

\begin{example}
For a left-regular graph with left degree 4 and girth larger than 4, initiating the proposed algorithm with the set of cycles of length $g$ and $g+2$, one can guarantee to find \emph{all} the elementary trapping sets of size less than $9$ with less than $5$ unsatisfied check nodes. This can be seen by the inspection of all the possible structures for such trapping sets and verifying that for each structure, the removal of only one variable node will result in another trapping set in $\mathcal{T}$. Subsequent removals of such nodes from an $(a,b)$ elementary trapping set with $a < 9$ and $b < 5$ in ${\cal T}$, thus, results in a sequence of embedded elementary trapping sets in ${\cal T}$, each with size only one less than that of its parent. The sequence will always end with a cycle of length $g$ or $g+2$. This implies that all such trapping sets satisfy the sufficient condition, mentioned earlier, for being found by the algorithm starting from a cycle of length $g$ or $g+2$.
\label{example4}
\end{example}

Similar results to those of Example~\ref{example4} can be found for other left-regular graphs. For irregular graphs however, it is very difficult to provide such guarantees. This is due to the fact that the number of possible structures for a trapping set of a given size could be very large in this case.

\begin{remark}
For irregular LDPC codes, the variable nodes with large degrees cannot be part of small trapping sets. This is formulated in the following lemma.
\begin{lemma}
In a graph $G$ with girth $g>4$, if an $(a,b)$ trapping set $\mathcal{S} $ contains a variable node $v$ of degree $d(v)>b$, then $a\ge d(v)+1-b$.
\label{lem99}
\end{lemma}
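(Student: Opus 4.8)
The plan is to count the distinct variable nodes of $\mathcal{S}$ that the $d(v)$ edges incident to $v$ are forced to create, using the unsatisfied-check budget $b$ together with the girth hypothesis. First I would list the neighbors of $v$ among the check nodes: by the no-parallel-edges assumption these are $d(v)$ distinct nodes, all lying in $\Gamma(\mathcal{S})$, and each is either in $\Gamma_{\mathrm{o}}(\mathcal{S})$ or in $\Gamma_{\mathrm{e}}(\mathcal{S})$. Since $|\Gamma_{\mathrm{o}}(\mathcal{S})|=b$, at most $b$ of these neighbors can be unsatisfied, so at least $d(v)-b$ of them, say $c_1,\dots,c_m$ with $m\ge d(v)-b$, are satisfied check nodes in $\Gamma_{\mathrm{e}}(\mathcal{S})$.

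The structural input comes next. Each $c_j$, being in $\Gamma_{\mathrm{e}}(\mathcal{S})$, has even degree in $G(\mathcal{S})$; as it is a neighbor of $\mathcal{S}$ this degree is at least $1$, hence at least $2$. Therefore each $c_j$ has at least one variable neighbor in $\mathcal{S}\setminus\{v\}$. I would then show that the neighbor sets of the $c_j$ inside $\mathcal{S}\setminus\{v\}$ are pairwise disjoint, which is exactly where the hypothesis $g>4$ enters: if two distinct satisfied checks $c_j$ and $c_k$ shared a common variable neighbor $u\ne v$, then $v$, $c_j$, $u$, $c_k$ together with the edges $(v,c_j)$, $(c_j,u)$, $(u,c_k)$, $(c_k,v)$ would form a cycle of length $4$, contradicting $g>4$.

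Finally, these $m$ pairwise disjoint nonempty sets all sit inside $\mathcal{S}\setminus\{v\}$, so $a-1=|\mathcal{S}\setminus\{v\}|\ge m\ge d(v)-b$, which rearranges to the claimed bound $a\ge d(v)+1-b$. The counting is routine; the one step deserving care — and the main obstacle if handled carelessly — is the disjointness claim, because it leans simultaneously on two facts that must both be invoked: the evenness (hence $\ge 2$) of the degree of a satisfied check in $G(\mathcal{S})$, which supplies a second variable neighbor, and the girth bound $g>4$, which forbids two such checks from sharing that neighbor with $v$. I would also note that the relevant ``$\ge 2$'' here comes from evenness in $G(\mathcal{S})$ rather than from the global minimum-degree assumption, so the argument does not rely on that standing convention.
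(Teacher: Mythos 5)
Your proof is correct and follows essentially the same route as the paper's: at least $d(v)-b$ neighbors of $v$ lie in $\Gamma_{\mathrm{e}}(\mathcal{S})$, each contributes a variable neighbor in $\mathcal{S}\setminus\{v\}$, and the girth condition $g>4$ forces these contributions to be distinct. Your version merely spells out two points the paper leaves implicit (the evenness of a satisfied check's degree in $G(\mathcal{S})$ guaranteeing a second variable neighbor, and the explicit $4$-cycle witnessing the disjointness), so there is nothing to add.
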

\begin{proof}
The proof is provided in Appendix A.
\end{proof}
Based on Lemma \ref{lem99}, for example, for an irregular code with girth larger than $4$, a variable node of degree 15 can not participate in an $(a,\, b)$ trapping set with $a < 13$ and $b < 4$. Such results can be used to simplify the algorithm by removing the large degree variable nodes and their edges from the graph.
\end{remark}
\begin{remark}
It is easy to see that for the left-regular graphs with left degree 3 or 4, all the trapping sets found by Algorithm 1 are ZP trapping sets. For the left-regular graphs with left degree 3, the obtained trapping sets are also absorbing sets.
\end{remark}
\begin{remark}
Our simulations for many practical LDPC codes show that in almost all the cases, $a_{i+1}\le a_i+3$.
\end{remark}

In the following, we discuss the selection of the initial set of elementary trapping sets.

\subsection{Initial Set of Trapping Sets}

One of the graphical objects that plays an important role in the structure of trapping sets
is a cycle. Tian {\em et al.}~\cite{Tian2004} showed that every stopping set includes the variable nodes
of at least one cycle. Related to this, the induced graph of the support of a pseudo-codeword always contains at least one cycle \cite{KV-03}.
In~\cite{Cole2008},~\cite{Xiao2009},~\cite{Zhang2009}, it was shown
that an overwhelming majority of dominant trapping (absorbing) sets are combinations of short cycles.
Short cycles are also easy to enumerate~\cite{Xiao2009}.
We thus use short cycles as the initial inputs to the proposed algorithm.
The following lemma provides more justifications for this choice.

\begin{lemma}
\begin{itemize}
\item []
\item [i)] In a left-regular graph $G$ with left degree $d_l\ge2$, if the induced subgraph $G(\mathcal{S})$ of an $(a,\,b)$ trapping set $\mathcal{S} $ does not contain any cycle, then $b\ge a(d_l-2)+2$. The inequality is satisfied with equality for elementary trapping sets.
\label{lem10}
\item [ii)] The variable nodes in any shortest cycle (of length $g$) of a Tanner graph
form an elementary trapping set.
\label{lem5}
\item [iii)] Let $\mathcal{T}$ be the set of all trapping sets $\mathcal{S} $ of a graph $G$, whose induced
subgraph $G(\mathcal{S})$ is connected and for which every node $v \in \mathcal{S}$ is connected
to at least two nodes in $\Gamma_{\mathrm{e}}(\mathcal{S})$. Then for every $\mathcal{S} \in \mathcal{T}$,
its induced subgraph $G(\mathcal{S})$ contains at least one cycle.
\label{lem6}
\item [iv)] Suppose that $\mathcal{S} \subset L$ is an absorbing set of a left-regular Tanner graph $G=(L\cup R,E)$
with left node degrees at least 2. Then $G(\mathcal{S})$ contains at least one cycle.
\label{lem7}

\item [v)] Suppose that $\mathcal{S} \subset L$ is a ZP trapping set of a Tanner graph $G=(L\cup R,E)$
with node degrees at least 3. Then $G(\mathcal{S})$ contains at least one cycle.
\label{lem8}
\end{itemize}
\end{lemma}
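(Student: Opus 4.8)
The plan is to treat the five parts as two edge-counting arguments and one girth argument, with the last two parts following as immediate corollaries. Parts (iv) and (v) reduce directly to part (iii): for an absorbing set with all $d(v)\ge 2$, or a ZP trapping set with all $d(v)\ge 3$, Lemma~\ref{lem1} guarantees that every variable node of $\mathcal{S}$ is joined to at least two satisfied check nodes in $G(\mathcal{S})$, which is exactly the defining membership condition for $\mathcal{T}$; invoking part (iii) then produces the required cycle (and since a satisfied-check neighbour of $v$ lies in the same connected component as $v$, the reduction survives even without assuming $G(\mathcal{S})$ connected). So the substantive work is in (i), (ii) and (iii).

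For part (i), I would use that a connected acyclic graph is a tree, so that its edge count equals its node count minus one. Since $G$ is left-regular with degree $d_l$, the induced subgraph $G(\mathcal{S})$ carries exactly $a\,d_l$ edges, while its node set $\mathcal{S}\cup\Gamma(\mathcal{S})$ has cardinality $a+|\Gamma(\mathcal{S})|$. The tree identity then gives $a\,d_l=a+|\Gamma(\mathcal{S})|-1$, i.e.\ $|\Gamma(\mathcal{S})|=a(d_l-1)+1$. Rewriting the edge count as a sum over check degrees yields $\sum_{c\in\Gamma(\mathcal{S})}\bigl(d(c)-1\bigr)=a\,d_l-|\Gamma(\mathcal{S})|=a-1$. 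Each satisfied (even-degree) check contributes at least $1$ to this sum, so $|\Gamma_{\mathrm{e}}(\mathcal{S})|\le a-1$, and therefore $b=|\Gamma(\mathcal{S})|-|\Gamma_{\mathrm{e}}(\mathcal{S})|\ge a(d_l-1)+1-(a-1)=a(d_l-2)+2$. Equality forces every even-degree check to have degree exactly $2$ and every odd-degree check to have degree exactly $1$, which is precisely the elementary case.

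Part (iii) I would argue by contradiction with the same Euler-type bookkeeping, restricted to the set $E_{\mathrm{e}}$ of edges running between $\mathcal{S}$ and the satisfied checks $\Gamma_{\mathrm{e}}(\mathcal{S})$. If $G(\mathcal{S})$ were acyclic, this bipartite subgraph would be a forest, hence $E_{\mathrm{e}}\le a+|\Gamma_{\mathrm{e}}(\mathcal{S})|-1$. Counting $E_{\mathrm{e}}$ from the variable side, where each of the $a$ nodes contributes at least $2$ edges, gives $E_{\mathrm{e}}\ge 2a$ and hence $|\Gamma_{\mathrm{e}}(\mathcal{S})|\ge a+1$; counting from the check side, where each satisfied check has even degree at least $2$, gives $E_{\mathrm{e}}\ge 2|\Gamma_{\mathrm{e}}(\mathcal{S})|$ and hence $|\Gamma_{\mathrm{e}}(\mathcal{S})|\le a-1$. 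These two bounds are incompatible, so $G(\mathcal{S})$ must contain a cycle. Note this argument uses nothing beyond the ``at least two satisfied checks'' property, which is exactly why it transfers to (iv) and (v).

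Part (ii) is where I expect the only real subtlety. Here $\mathcal{S}$ is the set of $g/2$ variable nodes of a shortest cycle $C$, and the goal is to exclude any check node of degree $\ge 3$ in $G(\mathcal{S})$, and more generally any check incident to a chord of $C$. The plan is: if a check $c$ is joined to two variable nodes $u,v\in\mathcal{S}$ by an edge that does not lie on $C$, then concatenating the length-$2$ path $u\!-\!c\!-\!v$ with the shorter of the two arcs of $C$ between $u$ and $v$ produces a closed walk of length at most $g/2+2$, which for $g>4$ is strictly smaller than $g$, contradicting the girth. Consequently every on-cycle check has degree exactly $2$ and every off-cycle check has degree exactly $1$, so $\mathcal{S}$ is elementary; the boundary case $g=4$ is immediate since then $|\mathcal{S}|=2$ already caps every check degree at $2$. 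The delicate point to verify is that the constructed closed walk is a genuine simple cycle and is strictly shorter than $g$, which is exactly where the girth hypothesis and the even arc-lengths forced by bipartiteness must be used carefully.
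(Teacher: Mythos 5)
Your proposal is correct and follows essentially the same route as the paper: part (i) is the same tree node-versus-edge count given in Appendix A (your global Euler bookkeeping is a tidier version of the paper's rooted-tree parent count), parts (iv) and (v) are the same reduction through Lemma~\ref{lem1} and part (iii), and the paper simply omits the ``simple'' proofs of (ii) and (iii), for which your forest-counting and girth arguments are valid completions. The only point to tighten is the on-cycle-check case in (ii): pair the chord endpoint $w$ with whichever cycle-neighbour of $c$ makes the arc \emph{avoiding} $c$ have length at most $(g-2)/2$, rather than blindly taking the shorter of the two arcs (which may pass through $c$), so that the resulting closed walk of length at most $g/2+1<g$ is genuinely a simple cycle.
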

\begin{proof}
The proof of Part (i) is provided in Appendix A. The proofs for Parts (ii) and (iii) are simple and thus omitted. Parts (iv) and (v) follow from Lemma~\ref{lem1} and Part (iii).
\end{proof}

It can be shown that Part (i) of Lemma~\ref{lem8} can be generalized to the case where variable node degree distribution is irregular. In this case, the result is modified as $b\ge a(\bar{d_{\mathcal{S}}}-2)+2$, where $\bar{d_{\mathcal{S}}}$ is the average degree of variable nodes in $\mathcal{S} $.  The following example, based on Part (i) of Lemma~\ref{lem8}, demonstrates that cycle-free $(a,b)$ trapping sets have relatively large values of $b$.

\begin{example}
For a left-regular graph $G$ with left degree 4, any cycle-free $(a,\,b)$ trapping set satisfies $b\ge 2(a+1)$. Such large values of $b$ for a given $a$ would imply that the $(a,\,b)$ trapping set is not dominant.
\end{example}

%
%
%

Our simulation results indicate that for denser graphs, the set of short cycles of
length $g$, or $g$ and $g+2$, where $g$ is the girth, is sufficient to find
almost all the small (with, say, $a \leq 10$) dominant trapping sets. In this case, adding short cycles
of larger lengths to the input set has negligible effect on the performance of the algorithm, while increasing its complexity. For example, we examined a number of randomly constructed codes with rates larger than 0.4. The codes had block length 1000 and left-regular Tanner graphs with left degree 5 and girth 6. In all cases, the trapping sets obtained by Algorithm 1 using cycles of length 6 and 8 as input were identical to those obtained by using cycles of length 6, 8 and 10 as the input set.

For sparser graphs, however, one may need to use short cycles of larger lengths (e.g., $g$, $g+2$, and $g+4$) as
the initial set.
\subsection{Complexity of the Algorithm}
The complexity of the algorithm is highly dependent on the short cycle distribution of the graph,
which itself is mostly a function of the degree distribution of the graph (code) \cite{Karimi2010}. As a result, in general, the complexity
increases much faster with the increase in the average variable and check node degrees of the graph than it does with increasing the block length.
To have a more detailed analysis of the complexity of Algorithm~1, we note that the total complexity can be divided into two parts: $a)$ Finding the initial input set and $b)$ Expanding the input set to larger trapping sets.

Regarding the complexity of Part $(a)$, assuming that an exhaustive brute force search is used to find cycles of length $k$, say for $g \leq k \leq g+4$, the complexity is  $O(nd_v^{k/2}d_c^{k/2})$,  for a $(d_v,d_c)$ regular graph with $n$ variable nodes. This is obtained by considering all the possible paths of length $k$ starting from all the $n$ variable nodes in the graph. The memory required for the storage of all the $k$-cycles is of order $O(kN_k)$, where $N_k$ is the number of $k$-cycles in the graph.
To the best of our knowledge, there is no theoretical result on how $N_k$ scales with $n$ or the degree distribution of the Tanner graph. Empirical results of \cite{Karimi2010} however suggest that $N_k$ is mainly a function of the degree distribution and is rather independent of $n$.

Regarding the complexity of expanding the input trapping sets to larger ones, consider the expansion of an $(a,b)$ trapping set ${\cal S}$ of a $(d_v,d_c)$ graph. Depending on the size $a' > a$ of the smallest trapping set(s) ${\cal S}'$ that contain ${\cal S}$, the complexity and memory requirements for finding and storing the sets ${\cal S'}$ would differ. For $a' = a+1, a+2$ and $a +3$, the complexity is $O(b d_c), O(b d_v d_c^2)$ and $O(b d_v^2 d_c^3)$, respectively. The memory requirement for these cases are respectively $O(a b d_c), O(a b d_v d_c^2)$ and $O(a b d_v^2 d_c^3)$. To see this, for example, consider the case where $a'=a+1$. To find ${\cal S}'$, one needs to check at most $b (d_c - 1)$ variable nodes as possible candidates, which corresponds to $O(b d_c)$ complexity. The memory required to store all possible trapping sets of size $a+1$ obtained through such a search is thus upper bounded by $(a+1) b d_c$, which is of order $O(a b d_c)$.

Based on the above discussions, assuming that the initial set is limited to cycles of length up to $g+4$, and that we only consider trapping sets ${\cal S}'$ of size up to $a' = a +3$ in the expansion process, the complexity of Algorithm~1 will be $O(d_v^2 d_c^3 (T \sum_{i=g/2}^{g/2+2} N_{2i} + n d_v^{g/2} d_c^{g/2-1}))$ and the memory requirement will be $O(T d_v^2 d_c^3 \sum_{i=g/2}^{g/2+2} 2i N_{2i})$, where $T$ is the maximum number of unsatisfied check nodes in Algorithm~1. It is however important to note that the actual complexity and memory requirements are much less than what these complexity bounds may suggest. In particular, our simulation results show that codes with block lengths up to about 10,000 with a wide variety of
degree distributions can be managed using the proposed algorithm on a regular desktop computer.

\subsection{Expansion of Non-Elementary Trapping Sets}

According to the general definition of trapping sets, any arbitrary set of variable nodes can be considered as a trapping set. Hence, to expand a connected trapping set $\mathcal{S} $ of size $a$, one just needs to select a variable node from the neighboring variable nodes, and add it to $\mathcal{S} $ to obtain a new trapping set $\mathcal{S}'$ with size $a'=a+1$. This method of expansion leads to an exponentially growing search space. Even by limiting the search space to the trapping sets in $\mathcal{T}$, i.e., connected trapping sets for which every variable node is connected to at least two satisfied check nodes, there are still too many configurations for $\mathcal{S}'$, especially when $a'\gg a$. For practical LDPC codes with $g>4$, however, considering a nested sequence of trapping sets, the size of the next larger trapping set $a'$ is almost always less than $a+3$.

The search for non-elementary trapping sets of size $a'\le a+3$ in a graph with girth $g>4$, can be performed similar to what was described for the elementary trapping sets with a number of small differences. For non-elementary trapping sets, since there is no limitation on the degrees of the check nodes in $G(\mathcal{S})$, only the variable nodes of $\mathcal{S} $ and their edges are removed from the graph. Then the shortest paths between different check nodes of $G(\mathcal{S})$ or the shortest lollipop walks starting from different check nodes of $G(\mathcal{S})$ are found. However, it should be mentioned that not all such structures will necessarily satisfy the condition that each variable node is connected to at least two satisfied check nodes. After finding a candidate trapping set, one should thus check for this condition. In summary, to find the non-elementary trapping sets of size $a'\le a+3$, the only modifications needed to be applied to Algorithm 1 are the followings:
\begin{algorithmic}
\STATE 5: Construct a new graph $G'$ by removing all the nodes of $\mathcal{S} $ from $G$.
\STATE 7: \textbf{for}  each node $c$ in $\Gamma(t_j)$ \textbf{do}\\
\STATE 8: Examine the neighborhood of $c$ in $G'$ one layer at a time and to the maximum of $i_{max}$ layers in search for paths with $i\le i_{max}$ variable nodes between $c$ and the other nodes of $\Gamma(t_j)$, and lollipop walks
with $i\le i_{max}$ variable nodes starting from $c$.
\STATE 19: \textbf{if} {($t' \in \mathcal{T}$) and ($t' \notin \mathcal{L}_{out}$) and ($|\Gamma_{\mathrm{o}}(t')|\le T$) }
\end{algorithmic}


\section{irregular LDPC codes}
For the irregular LDPC codes which do not have variable nodes of degree 2, Algorithm 1 without any modification can be used to find the dominant trapping sets. As mentioned in Remark 4 of Section III.C, based on the desired sizes of trapping sets, one may also remove the high-degree variable nodes and their edges from the graph to simplify the algorithm. In the case that the code has variable nodes of degree 2, some modifications are needed for the initial input set of the algorithm. In this section, we study the effect of degree-2 variable nodes on the structure of trapping sets in irregular LDPC codes, and present simple steps to find the corresponding trapping sets.\footnote{In case that the graph contains degree-1 variable nodes as well, a similar approach to the one described in Section IV.B (for finding dominant trapping sets which include degree-2 variable nodes) can be used to find the dominant trapping sets containing degree-1 variable nodes.}

\subsection{On the Degree-2 Variable Nodes}
It is known that degree-2 variable nodes play an important role in the performance of irregular LDPC codes. On one hand, to have codes with asymptotic performance close to the capacity, the proportion of degree-2 variable nodes should be as large as possible. This is usually a considerable fraction of the total variable nodes of the code. On the other hand, having a large proportion of degree-2 variable nodes results in a  small minimum distance 
and  a high error floor \cite{Tillich2006}. Cycles containing only degree-2 variable nodes are codewords. Hence, to have a large minimum distance, it is desirable to avoid such cycles, especially the shorter ones. To avoid all cycles of any length containing only degree-2 variable nodes, the number of these nodes $n_{v_2}$ must be strictly less than the number of check nodes $m$ (i.e., $n_{v_2}< m$). Based on this fact, a class of irregular LDPC codes with $n_{v_2}=m-1$, called \emph{extended irregular repeat accumulate} (eIRA) codes was proposed in \cite{Yang2004}. It was shown in \cite{Yang2004} that these codes exhibit relatively better error floor performance compared to the codes constructed by the optimized degree distributions without applying this restriction on $n_{v_2}$. Related to this, it was proved in \cite{Otmani2007} that for the case where $n_{v_2}>m$, the minimum distance grows only logarithmically with the code length. For the special case where $n_{v_2}=m$ and all the degree-2 variable nodes are part of a single cycle, the minimum distance is a sub-linear power function of the block length \cite{Tillich2006}. In the following, we study the effect of having a large fraction of degree-2 variable nodes on the structure of trapping sets in irregular LDPC codes.

\begin{example}
  For all the degree distributions optimized  for rate-1/2 LDPC codes on the binary-input AWGN (BIAWGN) channel  \cite{Richardson2001}, $43\%$ to $55\%$ of variable nodes are of degree 2. This implies that, on average,  every check node in the corresponding codes is connected to about $2$ variable nodes of degree 2.
\end{example}
The average number of degree-2 variable nodes connected to each check node becomes even larger for the optimized codes of higher rate. This is explained in the next example.
\begin{example}
   For the optimized degree distribution of rate $8/9$ over the BIAWGN channel with the maximum variable node degree 10  \cite{Richardson2001}, $31\%$ of variable nodes are of degree 2. This implies that, on average, every check node in a Tanner graph with this degree distribution is connected to about $6$ variable nodes of degree 2.
\end{example}

Consequently, it is very likely to see chains of degree-2 variable nodes, referred to as \emph{2-chains}, in the Tanner graph of LDPC codes with optimized degree distributions. The length of a 2-chain is defined as the number of the edges in the subgraph induced by the degree-2 variable nodes of the chain. That is, the length of a 2-chain containing $k$ variable nodes of degree 2 is $2k$. A 2-chain of length $2k$ is a $(k,\,2)$ trapping set (with the exception of the case where the chain is closed and forms a cycle; in that case, we refer to the 2-chain as a \emph{2-cycle}. A 2-cycle of length $2k$, is a $(k,\,0)$ trapping set). Having only 2 unsatisfied check nodes, 2-chains of length $2k$ are among the most dominant trapping sets of size $k$. Fig.~\ref{degree2}$(a)$ shows a 2-chain of length 10 (a $(5,\,2)$ trapping set). Note that this trapping set also contains  two $ (4,\,2)$, three $ (3,\,2)$ and four $(2,2)$ trapping sets as its subsets. It is worth noting that although for the cases where $n_{v_2}=m-1$ and  $n_{v_2}=m$, the graph may have no or only one 2-cycle, it can have many 2-chains of different lengths. For example, it is easy to see that for the case where $m=n_{v_2}$ and all the degree-2 variable nodes are contained in a single cycle, there are $m$ 2-chains of length $2k$, $1\le k \le m-1$.

\begin{figure}[!t]
\centering
\includegraphics[width=1.6in]{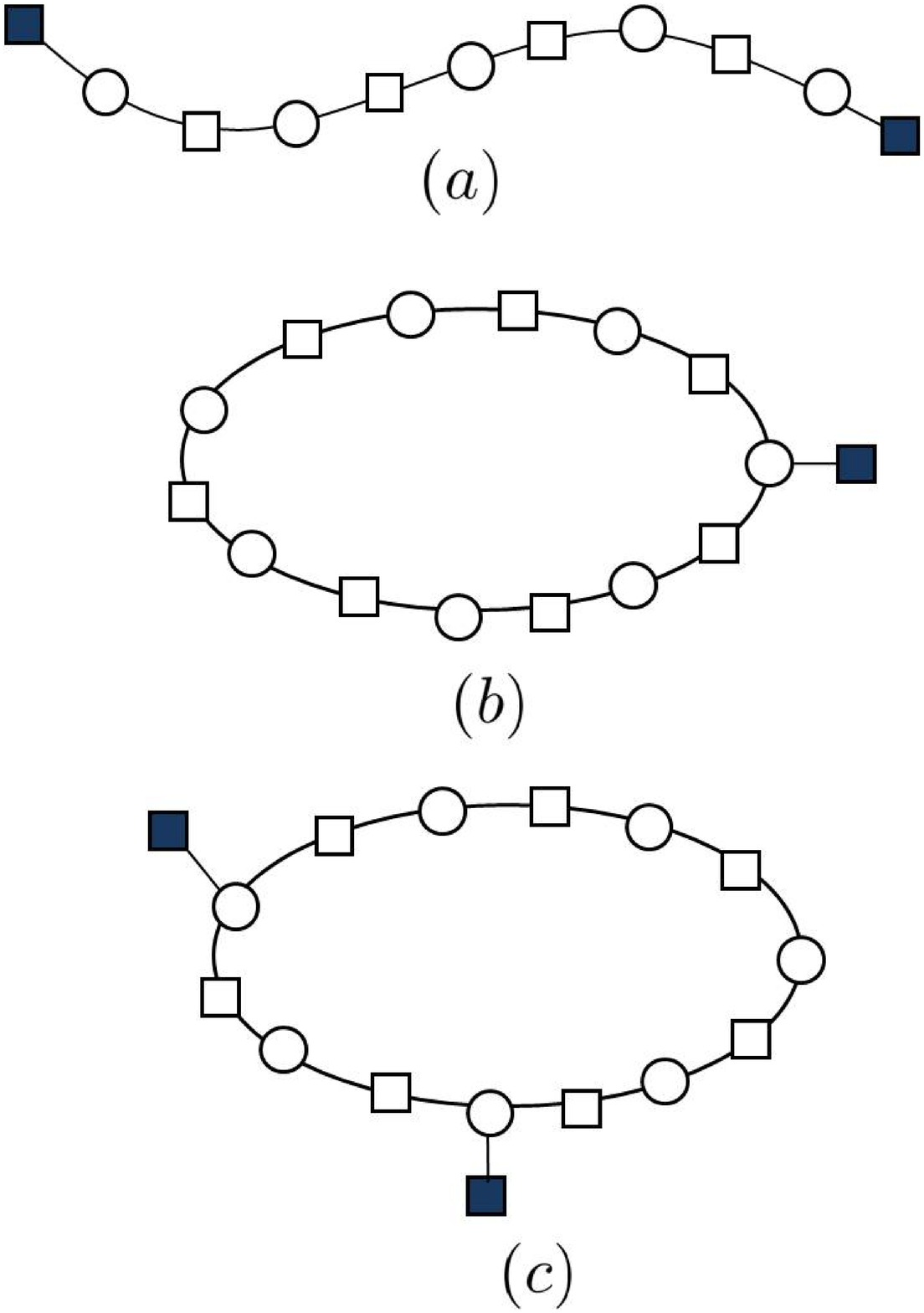}
\caption{Typical trapping sets constructed mostly by the degree-2 variable nodes.}
\label{degree2}
\end{figure}

Another aspect of having  2-chains in the Tanner graph of irregular LDPC codes is that they might participate in short cycles with other variable nodes of higher degrees. These cycles have low approximate cycle extrinsic message degree (ACE) (ACE is defined as $\sum_i d_i-2 $, where the summation is taken over all the variable nodes of the cycle, and $d_i $ is the degree of the $i$$^{th}$ variable node in the cycle \cite{Tian2004}). It has been shown that cycles with low ACE deteriorate the error rate performance, and that avoiding them in the construction of irregular LDPC codes generally improves the error rate \cite{Xiao2004}, \cite{Vukobratovic2008}.
\begin{example}
Consider the case where $m=n_{v_2}$ and all the degree-2 variable nodes are contained in a single cycle. In this case, there exist two 2-chains between any two check nodes of the graph. This implies that every variable node of degree $d_v>2$ along with the 2-chains connecting its check nodes form several trapping sets with at most $d_v-2$ unsatisfied check nodes.
\end{example}
\begin{example}
 Fig.~\ref{degree2}$(b)$ shows a $(7,\,1)$ trapping set composed of one variable node of degree 3 and a chain of six variable nodes of degree 2.
 \end{example}
 \begin{example}
The $(12,\,1)$ trapping sets of the $(1944,\,972)$ LDPC code adopted in the IEEE 802.11 standard \cite{IEEE802.11n} are single cycles of length $24$, each consisting of a 2-chain of length 22 and one degree-3 variable node.
 \end{example}
Even in the cases where $n_{v_2}<m$ (but not much smaller), it is likely to see cycles mostly constructed by 2-chains.
\begin{example}
Fig.~\ref{degree2}$(c)$ shows a $(7,\,2)$ trapping set composed of two variable nodes of degree 3 and five variable nodes of degree 2.
\end{example}
  Due to the important role that 2-chains (and 2-cycles) play in the formation of dominant trapping sets, we study the necessary condition to avoid these structures in the following theorem.

\begin{theorem}
Let $m$ be the number of check nodes and $n_{v_2}$ be the number of degree-2 variable nodes in the graph $G$ corresponding to an irregular code $\mathcal{C}$. If $G$ has no 2-chains of length $2k$ or larger, for $k \geq 2$ (and no 2-cycles of length less than or equal to $2k$) then
\[m\ge n_{v_2}(1+\frac{1}{\sum_{i=0}^{k-2}{(d_{c,max}-1)^{\lfloor\frac{i+1}{2} \rfloor} } })\,,\]
where $d_{c,max}$ is the maximum check node degree in $G$.
\label{nv2_m}
\end{theorem}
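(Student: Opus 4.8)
The plan is to translate the hypotheses about 2-chains and 2-cycles into a statement about a purely check-side auxiliary graph, and then to reduce the theorem to an extremal bound on trees of bounded diameter and bounded degree. First I would build an auxiliary graph $H$ whose vertex set is the set of all $m$ check nodes and in which each degree-2 variable node contributes exactly one edge, joining its two (necessarily distinct, since $G$ has no parallel edges) check-node neighbors; thus $H$ has $m$ vertices and $n_{v_2}$ edges. Under this correspondence a 2-chain consisting of $\ell$ degree-2 variable nodes is exactly a simple path of $\ell$ edges in $H$, and a 2-cycle consisting of $\ell$ degree-2 variable nodes is exactly a cycle of $\ell$ edges in $H$. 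In particular, the degree of any vertex of $H$ is at most $d_{c,max}$, since it is bounded by the total degree of the corresponding check node in $G$.

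Next I would show that the two hypotheses force $H$ to be a forest. The assumption that $G$ contains no 2-chain of length $2k$ or larger means that $H$ has no simple path with $k$ or more edges, i.e., every path in $H$ uses at most $k-1$ edges. Suppose $H$ contained a cycle with $\ell$ edges. By the assumption that $G$ has no 2-cycle of length at most $2k$, such a cycle must satisfy $\ell \geq k+1$; but deleting a single edge from it yields a simple path with $\ell - 1 \geq k$ edges, contradicting the no-long-path property. Hence $H$ is acyclic, i.e., a forest (this also shows $H$ is simple, since a pair of parallel edges would be a 2-cycle of length $4 \le 2k$). Since a forest on $m$ vertices with $n_{v_2}$ edges has exactly $m - n_{v_2}$ connected components, and each component is a tree of maximum degree at most $d_{c,max}$ whose longest path has at most $k-1$ edges, the problem reduces to bounding the number of edges in a single such tree.

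The heart of the argument, and the step I expect to be the main obstacle, is the extremal count: the maximum number of edges in a tree of maximum degree at most $d_{c,max}$ and diameter at most $k-1$ is $S := \sum_{i=0}^{k-2}(d_{c,max}-1)^{\lfloor (i+1)/2\rfloor}$. I would prove this upper bound by a center-expansion argument, splitting on the parity of the longest path length $L \leq k-1$. If $L = 2r$ is even, root the tree at its central vertex; the root has at most $d_{c,max}$ neighbours and every other non-leaf vertex has at most $d_{c,max}-1$ children, so the number of vertices at distance $j \geq 1$ is at most $d_{c,max}(d_{c,max}-1)^{j-1}$ and the edge count is at most $\sum_{j=1}^{r} d_{c,max}(d_{c,max}-1)^{j-1}$. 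If $L = 2r+1$ is odd, root at the central edge $(v_0,v_0')$; each of $v_0,v_0'$ has at most $d_{c,max}-1$ further children, giving an edge count of at most $2\sum_{j=0}^{r}(d_{c,max}-1)^{j} - 1$. A routine manipulation using $d_{c,max} = (d_{c,max}-1)+1$ shows that both expressions (for $k-1 = 2r$ and $k-1 = 2r+1$, respectively) collapse to the single closed form $S$; since a larger diameter can only increase the count, the maximum over all diameters $\leq k-1$ is attained at $L = k-1$ and equals $S$.

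Finally I would assemble the pieces. Each of the $m - n_{v_2}$ tree components of $H$ has at most $S$ edges, so summing the edges over all components gives $n_{v_2} \leq (m - n_{v_2})\,S$. Since $S \geq 1$, rearranging yields $n_{v_2}/S \leq m - n_{v_2}$, that is,
\[
m \;\geq\; n_{v_2}\left(1 + \frac{1}{S}\right) \;=\; n_{v_2}\left(1 + \frac{1}{\sum_{i=0}^{k-2}(d_{c,max}-1)^{\lfloor (i+1)/2\rfloor}}\right),
\]
which is the claimed bound. The only place requiring genuine care is the parity case analysis in the extremal tree count and the verification that both cases reduce to the stated sum; the remainder is a clean graph-theoretic reduction, and the quantity $m - n_{v_2}$ is naturally read as the number of connected components of the forest $H$.
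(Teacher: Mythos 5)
Your proof is correct and follows essentially the same route as the paper: the paper works directly with the induced subgraph $G_{v_2}$ of degree-2 variable nodes, shows it is a forest, counts one extra check node per component, and invokes its Lemma~\ref{2chain_size} for the same extremal bound $S$ on the component size. Your contraction of each degree-2 variable node to an edge of an auxiliary check-node graph $H$ is just a repackaging of that argument (with the added benefit of a cleaner component count $m-n_{v_2}$ that avoids the ceiling, and a fully worked-out parity analysis of the extremal tree bound that the paper only sketches).
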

\begin{proof}
Let $G_{v_2}$ denote the induced subgraph of degree-2 variable nodes of the graph $G$. This subgraph contains no cycle. Otherwise, the length of such a cycle would be at least $2k+2$, which would imply the existence of a 2-chain of length $2k$ in $G_{v_2}$, and thus in $G$. This contradicts the assumption of the theorem. The subgraph $G_{v_2}$ is thus composed of some tree-like components. For each component, the number of check nodes is always larger than the number of variable nodes by one. Therefore the total number of check nodes of the graph is more than the number of degree-2 variable nodes by at least the number of disjoint components in  $G_{v_2}$ (some check nodes of $G$ may not appear in  $G_{v_2}$). To avoid 2-chains of length $2k$ or larger, the maximum number of variable nodes in each component is $\sum_{i=0}^{k-2}{(d_{c,max}-1)^{\lfloor\frac{i+1}{2} \rfloor}} $(Appendix A, Lemma \ref{2chain_size}). The minimum number of components in $G_{v_2}$ is thus $\lceil n_{v_2}/\sum_{i=0}^{k-2}{(d_{c,max}-1)^{\lfloor\frac{i+1}{2} \rfloor}} \rceil$ .
\end{proof}
 \vspace{.2cm}

 Theorem~\ref{nv2_m} can be used to determine the maximum number of degree-2 variable nodes in an irregular graph to avoid 2-chains (and 2-cycles) of a specific length.
 \begin{example}
   For an irregular code with 1000 check nodes of degree $d_c=6$, to avoid $(4,\,2)$ trapping sets corresponding to 2-chains of length 8, the number of variable nodes of degree 2 must be  at most 910.
 \label{ex1000}
 \end{example}
 Theorem~\ref{nv2_m} can be also used to obtain some information about the existing trapping sets in a code.
 \begin{example}
 For the same scenario as that of Example~\ref{ex1000} (i.e., $m=1000,\,\, d_c=6$), the eIRA construction \cite{Yang2004} results in $n_{v_2}=m-1=999$. For these parameters, the smallest value of $k$ which satisfies the inequality of Theorem 1 is $k=9$. This implies that the eIRA code will have 2-chains of length 16 and smaller, corresponding to $(k,\,2)$ trapping sets for all values of $k<9$.
 \end{example}
\subsection{Finding Trapping Sets of Irregular LDPC Codes}
\label{abf}
In this section, we present a simple process to find the dominant trapping sets involving degree-2 variable nodes. The process can be used in combination with Algorithm 1 to find the dominant trapping sets of irregular graphs containing degree-2 variable nodes. It is important to note that according to the definition of absorbing sets, any variable node of degree 2 in these sets is connected to 2 satisfied check nodes. Also, for the trapping sets found by Algorithm 1, each variable node is connected to at least 2 satisfied check nodes. Therefore, 2-chains and other trapping sets containing variable node(s) of degree 2 with one satisfied check node are neither absorbing sets nor found by Algorithm 1. In fact, it appears that being connected to 2 satisfied check nodes is too strong of a condition for a variable node of degree 2 to be part of a dominant trapping set. For this reason, we consider also trapping sets whose variable nodes of degree 2 are connected to only one satisfied check node. To obtain such trapping sets using the expansion of smaller trapping sets, we consider an $(a-1,\,b)$ trapping set $\mathcal{S} $ which is expanded to a trapping set $\mathcal{S}'$ by the connection of a variable node $v$ of degree 2 to an unsatisfied check node of $\mathcal{S} $. Three cases are possible:
\begin{itemize}
\item [$a)$]$v$ is not connected to any other check node of $\Gamma(\mathcal{S})$. In this case, $\mathcal{S}'=\mathcal{S}\cup\{v\}$ is an $(a,b)$ trapping set. If $\mathcal{S} $ is elementary, so is $\mathcal{S}'$.
\item [$b)$]$v$ is also connected to a satisfied check node of $\mathcal{S} $. In this case, $\mathcal{S}'=\mathcal{S}\cup\{v\}$ is an $(a,b)$ non-elementary trapping set.
\item [$c)$]$v$ is also connected to another unsatisfied check node of $\mathcal{S} $. In this case, $\mathcal{S}'=\mathcal{S}\cup\{v\}$ is an $(a,b-2)$ trapping set. If $\mathcal{S} $ is elementary (or is in the set $ \mathcal{T}$), so is $\mathcal{S}'$.
\end{itemize}
Such an expansion of a trapping set can be performed multiple times by adding one neighboring variable node of degree 2, each time. This is summarized in Algorithm 2. In a general case, Algorithm 2 can be used with Algorithm 1 to expand the trapping sets found by Algorithm 1. This is summarized in Algorithm 3.

\hspace{-.25in}
\linethickness{0.275mm}
\line(1,0){250} \\
\textbf{Algorithm 2}: Finding trapping sets of size up to $k$ with the number of unsatisfied check nodes up to $T$ constructed by adding degree-2 variable nodes to the input trapping sets for an irregular LDPC code with the Tanner graph $G=(L\cup R\,,E)\,$.\\
($\mathcal{L}_{in}$ and $\mathcal{L}_{out}$ are the lists of input and output trapping sets, respectively)\\
\linethickness{0.125mm}
\line(1,0){250}
\begin{algorithmic}[1]
\STATE  \textbf{Inputs:} $G$, $\mathcal{L}_{in}$
\STATE $\mathcal{L}_{out}$ $\leftarrow \emptyset$.
\REPEAT
\STATE Select an element of $\mathcal{L}_{in}$ with size less than $k$, and denote it as  $t$.
\STATE Form the set $N_2(t)$ which contains variable nodes of degree 2 in $L\backslash t$ that are connected to at least one unsatisfied check node of $t$, i.e., to $\Gamma_{\mathrm{o}}(t)$.
\label{step1}
\FOR {each node $v$ in $N_2(t)$}
\STATE $t'\leftarrow t \cup \{v\}$.
\IF {($t' \in \mathcal{T}$) \footnote{This condition ensures that each variable node of degree larger than 2 is connected to at least 2 satisfied check nodes. The condition has no bearing on  degree-2 variable nodes.} and ($t' \notin \mathcal{L}_{out}$) and ($|\Gamma_{\mathrm{o}}(t')|\le T$) }
\label{step2}
\STATE $\mathcal{L}_{out}\leftarrow \mathcal{L}_{out} \cup \{t'\}$.
\ENDIF
\ENDFOR
\UNTIL { all the elements of $\mathcal{L}_{in}$ are selected.}
\STATE  \textbf{Output:} $\mathcal{L}_{out}$.
\end{algorithmic}

\begin{remark}
Note that in Algorithm 2, the number of unsatisfied check nodes of the resultant trapping sets never increases. Hence, to find trapping sets of size $a$ with less than $b$ unsatisfied check nodes, one should consider all the ($a',b')$ trapping sets with $a'<a,\,b'<b.$\footnote{Although this condition may not cover all the trapping sets discussed in Part $c$ of Section IV.B, our simulations show that for the tested codes, almost all the trapping sets are in fact found by Algorithm 2. The trapping sets that are missed by Algorithm 2 are the ones that can {\em only} be obtained by starting from trapping sets with larger number of unsatisfied check nodes.} It should be mentioned that since every single variable node of degree $d_v$ can be regarded as a $(1,d_v)$ trapping set, to find the trapping sets with less than $b$ unsatisfied check nodes, we consider also all the variable nodes of degree $d_v\le b$ as part of the initial set. For example, for the case of $b=3$, starting with a single variable node of degree $d_v=2$ or $d_v=3$, two typical structures of the resultant trapping sets are shown in Figs.~\ref{degree2_3}(a) and~\ref{degree2_3}(b), respectively. Note that starting from a degree-2 variable node and performing the above steps results in finding a 2-chain.
\end{remark}
%

\begin{figure}[!t]
\centering
\includegraphics[width=1.6in]{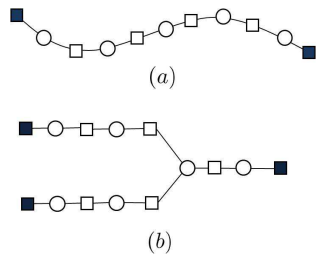}
\caption{Typical expansions of degree-2 and degree-3 variable nodes by adding the neighboring degree-2 variable nodes.}
\label{degree2_3}
\end{figure}

\hspace{-.25in}
\linethickness{0.275mm}
\line(1,0){250} \\
\textbf{Algorithm 3}: Finding trapping sets of size up to $k$ with the number of unsatisfied check nodes up to $T$ for an irregular LDPC code with the Tanner graph $G=(L\cup R\,,E)\,$.\\
($\mathcal{L}_{in}$ and $\mathcal{L}_{out}$ are the lists of input and output trapping sets, respectively.)\\
\linethickness{0.125mm}
\line(1,0){250}
\begin{algorithmic}[1]
\STATE  \textbf{Inputs:} $G$, $\mathcal{L}_{in}$
\STATE Use $\mathcal{L}_{in}$ as the input of Algorithm 1
\STATE $\mathcal{L}1_{out}=$ trapping sets found by Algorithm 1
\STATE $\mathcal{L}2_{in}= \mathcal{L}1_{out}\bigcup$ $\{$low degree variable nodes$\}$
\STATE Use $\mathcal{L}2_{in}$ as the input of Algorithm 2
\STATE $\mathcal{L}_{out}=$ trapping sets found by Algorithm 2
\STATE \textbf{Output:} $\mathcal{L}_{out}$.
\end{algorithmic}
\linethickness{0.275mm}
\line(1,0){250}

\begin{remark}
For irregular codes, in addition to short cycles, cycles with low ACE are also considered as part of the initial input set of Algorithm 1. This is because these cycles may not be found using the expansion process of Algorithm 1. Algorithm 1 finds the smallest trapping sets containing the input, which are usually the combination of the input and a short cycle (or a structure described in Lemmas 2 -- 4). Since variable nodes of large degree are more likely to be part of such structures, the outputs of Algorithm 1 are usually the combinations of the input and variable node(s) of large degree. This is while cycles with low ACE are generally constructed by low degree variable nodes. Cycles with low ACE can be easily found by monitoring the ACE value during the execution of a cycle finding algorithm.
\end{remark}
\begin{remark}
As an alternative approach to using Algorithm 3, one can only use Algorithm 2 with the
variable nodes of low degree and cycles with low ACE as the initial input set, and then recursively expand them to larger trapping sets. It should however be noted that for the irregular LDPC codes with a small fraction of degree-2 variable nodes, this approach may not find all the dominant trapping sets of the code.
\end{remark}

\section{Numerical Results}
For the simulations, we assume binary phase-shift keying (BPSK) modulation over the AWGN channel with coherent detection. Notations $E_b$ and $N_0$ are used for the average energy per information bit and the one-sided power spectral density of the AWGN, respectively.
\subsection{Regular Codes}
We have applied the proposed algorithm successfully to a large number of regular LDPC codes.
Here, we only present the results for four of them. The first three examples are random and structured LDPC codes whose dominant trapping sets have already been reported in the literature and thus provide us with a reference for comparison. The fourth example is a random LDPC code of rate $1/2$ with variable node degree 4. To verify the trapping sets found by the proposed algorithm for this code, we estimate the error floor using importance sampling \cite{Cole2008} based on the obtained trapping sets and demonstrate that the estimation is practically identical to the results of Monte Carlo simulations.
The reported running times in the following examples are for a desktop computer with $2$-GHz CPU and $1$ GB of RAM.

\begin{example}
We consider an LDPC code constructed by the progressive edge growth (PEG) algorithm \cite{Hu2005} $($\emph{PEGReg252x504} of~\cite{Online}$)$. This code is left-regular with the left degree 3, and girth 8. The same code was also investigated in \cite{Kyung2010} and the distribution of its fully absorbing sets was determined. For Algorithm 1, the short cycles of length $g$, $g+2$ and $g+4$  were used as the initial input set. The algorithm was limited to
finding trapping sets of maximum size 13, and the threshold $T$ was selected such that only
the trapping sets with the two smallest values of $b$ for each size were considered. (Using a larger $T$ has no effect on the accuracy of the results reported here.) Since all the variable nodes have degree 3, all the trapping sets found by Algorithm 1 are absorbing sets. Fully absorbing sets were found by examining the obtained absorbing sets and testing them for the definition of a fully absorbing set. Table~\ref{pegreg} shows the absorbing sets and the fully absorbing sets found by the proposed algorithm and their multiplicities. In the table, we have also reported the results obtained by the exhaustive search algorithm of \cite{Kyung2010}, for comparison. (Note that the hyphen notation ``-" in the table means that no data was reported.)
As can be seen from Table \ref{pegreg}, for many classes of trapping sets, the proposed algorithm found exactly the same number of fully absorbing sets as the exhaustive search algorithm of \cite{Kyung2010} did. For the other classes, the difference between the two sets of results is rather small. Moreover, the proposed algorithm found $(11,3)$, $(13,3)$, $(10,4)$ and $(12,4)$  fully absorbing sets which are out of the reach of the exhaustive search algorithm. It is worth mentioning that the exhaustive search algorithm of \cite{Kyung2010} took about 7 hours to find only the first three rows of Table \ref{pegreg} \cite{Kyung2010} (needless to say, the larger the size of the absorbing sets, the longer the running time of the algorithm). This is while Algorithm 1 took only 10 minutes to find all the absorbing sets listed in Table \ref{pegreg}.

\begin{table}[!h]
\caption{\small{Dominant Absorbing Sets(ABS) and Fully Absorbing Sets of the \emph{PEGReg252x504} Code Obtained by the Proposed Algorithm and the Exhaustive Search Algorithm of \cite{Kyung2010}}}
\centering
\begin{tabular}{|c|c|c|c|c|}
\hline
\cline{1-4}
\small{\textbf{Trapping}}&\small{\textbf{Proposed}} &\small{\textbf{Proposed}}&\small{\textbf{Exhaustive}} \\
\small{\textbf{Set}} &\small{\textbf{Algorithm}}&\small{\textbf{Algorithm}}&\small{\textbf{Search \cite{Kyung2010}}} \\
\small{\textbf{}} &\small{\textbf{(ABS)}}&\small{\textbf{(Fully ABS)}}&\small{\textbf{(Fully ABS)}} \\

\hline\hline
\small{ {(4, 4)}}&\small{802}&\small{760}&\small{760}\\\hline
\small{ {(5, 3)}}&\small{14}&\small{14}&\small{14}\\\hline
\small{(5, 5)}&\small{11279}&\small{10156}&\small{10156}\\\hline
\small{ {(6, 4)}}&\small{985}&\small{849}&\small{849}\\\hline
\small{(6, 6)}&\small{86391}&\small{66352}&\small{66352}\\\hline
\small{ {(7, 3)}}&\small{57}&\small{47}&\small{47}\\\hline
\small{(7, 5)}&\small{27176}&\small{{21810}}&\small{{22430}}\\\hline
\small{ {(8, 2)}}&\small{5}&\small{4}&\small{4}\\\hline
\small{(8, 4)}&\small{2610}&\small{{2258}}&\small{{2270}}\\\hline
\small{ {(9, 1)}}&\small{1}&\small{1}&\small{1}\\\hline
\small{(9, 3)}&\small{156}&\small{146}&\small{146}\\\hline
\small{ {(10, 2)}}&\small{6}&\small{6}&\small{6}\\\hline
\small{(10, 4)}&\small{7929}&\small{{6691}}&\small{{-}}\\\hline
\small{ {(11, 3)}}&\small{605}&\small{{558}}&\small{{-}}\\\hline
\small{ {(12, 2)}}&\small{25}&\small{{24}}&\small{{26}}\\\hline
\small{(12, 4)}&\small{23668}&\small{{19959}}&\small{{-}}\\\hline
\small{ {(13, 1)}}&\small{1}&\small{1}&\small{1}\\\hline
\small{(13, 3)}&\small{2124}&\small{{1954}}&\small{{-}}\\\hline

\cline{1-4}
\end{tabular}
\label{pegreg}
\end{table}
\end{example}
\begin{example}
In this example, we consider the Tanner $(155,\,64)$ code~\cite{Tanner-01}.
This code was also investigated in~\cite{Wang2009}. The exhaustive search algorithm of \cite{Wang2009}
showed that this code has no trapping set of length less than $8$ with $2$ unsatisfied check nodes and has no
trapping set of length up to $11$ with $1$ unsatisfied check node. It was also shown in \cite{Wang2009}
that the code has $465$ $(8,2)$ trapping sets.

The girth for the Tanner graph of this code is $g=8$. The short cycles of length $g$, $g+2$ and $g+4$ were
used as the initial inputs to Algorithm 1. The algorithm was limited to only
find trapping sets of maximum size 12 and the threshold $T$ was selected such that only the trapping sets with the two smallest values of $b$ for each size were considered. Table~\ref{tanner155} shows the trapping sets found by the proposed algorithm
and their multiplicity. As can be seen in the table, the algorithm found all the 465 $(8,2)$ trapping sets among others. All the trapping sets in Table \ref{tanner155} were found
in less than $2$ minutes.

To further verify that the obtained trapping sets do
in fact include the dominant ones, we performed Monte Carlo simulations on the code
with a 4-bit quantized min-sum decoder over the AWGN channel at
signal-to-noise ratio (SNR) of 6.5 dB (which is in the error floor region of this code).
Among the $300$ error patterns, about $90\%$ were $(8,2)$ trapping sets,
about $8\%$ were $(10,2)$ trapping sets, and only 2 did not belong to the sets reported in
Table~\ref{tanner155}.
\label{ex1}
\end{example}

\begin{table}[!h]
\caption{\small{Dominant Trapping Sets of the Tanner $(155,\,64)$ Code Obtained by the Proposed Algorithm}}
\centering
\begin{tabular}{|c|c|c|c|c|}
\hline
\cline{1-2}
\small{\small{\textbf{Trapping Set}}}&\small{\textbf{Multiplicity}}\\
\hline\hline
\small{(4,4)}&\small{465}\\\hline
\small{(5,3)}&\small{155}\\\hline
\small{(6,4)}&\small{930}\\\hline
\small{(7,3)}&\small{930}\\\hline
\small{(8,2)}&\small{465}\\\hline
\small{(9,3)}&\small{1395}\\\hline
\small{(10,2)}&\small{1395}\\\hline
\small{(11,3)}&\small{1860}\\\hline
\small{(12,2)}&\small{930}\\\hline

\cline{1-2}
\end{tabular}
\label{tanner155}
\end{table}

\begin{example}
As the third example, we consider the Margulis $(2640,\,1320)$ code~\cite{Margulis},~\cite{Online}.
It is known that the most dominant trapping sets of this code are $1320$ $(12,4)$
and $1320$ $(14,4)$ trapping sets~\cite{Richardson2003}. The Tanner graph of this code
has girth $g = 8$. The set of short cycles of
length $g$, $g+2$ and $g+4$ was used as the
input set of the proposed algorithm. The algorithm was limited to use only the trapping sets with the two smallest values of $b$ for each size. Since the degree of all the variable nodes of this code is 3, all the trapping sets found by Algorithm 1 are also absorbing sets. The first column in Table~\ref{marg2640} shows the dominant
absorbing sets found by Algorithm 1. For comparison, the dominant trapping sets obtained by the algorithm of \cite{Abu2010} are listed in the last column of Table~\ref{marg2640}. It should be noted that in \cite{Abu2010} there is no condition on the number of satisfied check nodes connected to each variable node. Thus to have a fair comparison, we also consider the trapping sets constructed by the combination of trapping sets found by Algorithm 1 and one of their neighboring variable nodes. The second column of Table~\ref{marg2640} shows the number of such trapping sets.\footnote{Our simulations indicate that the effect of extra trapping sets found by removing the constraint on the number of satisfied check nodes connected to each variable node of the trapping set on the error floor performance of the code is negligible.} As can be seen, for all the trapping set classes, the proposed algorithm performs at least as well as the algorithm of \cite{Abu2010}. Moreover, the required time for the algorithm of \cite{Abu2010} was 7 days on a 2.8 GHz PC \cite{Abu2010}, while the proposed algorithm took about 5 hours to finish. As another comparison for the running time of the proposed algorithm, it took the algorithm 55 minutes to find all the absorbing sets of size less than 15, while the same task took 8.2 hours  for the impulse method of \cite{Cole2008} on a comparable
computer (2.2-GHz CPU with 1 GB RAM).

\label{ex2}
\end{example}

\begin{table}[!h]
\caption{\small{Dominant Trapping Sets of the Margulis $(2640,\,1320)$ Code Obtained by the Proposed Algorithm and the Algorithm of \cite{Abu2010}}}
\centering
\begin{tabular}{|l|c||l|c|c|c|c|}
\hline
\cline{1-4}
\small{\textbf{Trapping}} &\small{\textbf{Proposed}}&\small{\textbf{Proposed}}&\small{\textbf{Algorithm}} \\
\small{\textbf{Set}} &\small{\textbf{Algorithm}}&\small{\textbf{Algorithm}}&\small{\textbf{ of \cite{Abu2010}}} \\
\small{\textbf{}} &\small{\textbf{(Absorbing)}}&\small{\textbf{(Trapping)}}&\small{\textbf{(Trapping)}} \\
\hline\hline
\small{(7, 5)}&\small{7920}&\small{7920}&\small{-}\\\hline
\small{(8, 6)}&\small{106920}&\small{$>$106920}&\small{-}\\\hline
\small{(9, 5)}&\small{2640}&\small{2640}&\small{-}\\\hline
\small{(10, 6)}&\small{117480}&\small{$>$117480}&\small{-}\\\hline
\small{(11, 5)}&\small{5280}&\small{5280}&\small{9}\\\hline
\small{(12, 4)}&\small{1320}&\small{1320}&\small{1320}\\\hline
\small{(13, 5)}&\small{2640}&\small{26400}&\small{2699}\\\hline
\small{(14, 4)}&\small{1320}&\small{1320}&\small{1320}\\\hline
\small{(15, 5)}&\small{0}&\small{26400}&\small{7938}\\\hline
\small{(16, 6)}&\small{0}&\small{258347}&\small{21153}\\\hline
\small{(17, 5)}&\small{5280}&\small{5280}&\small{0}\\\hline
\small{(18, 6)}&\small{0}&\small{132000}&\small{2642}\\\hline

\cline{1-4}
\end{tabular}
\label{marg2640}
\end{table}

\begin{example}
For this example, we consider a $(1008,\,504)$ random code with variable node degree 4 and check node degree 8 constructed by the program of \cite{Online}.\footnote{Using \emph{code6.c} with seed=380.} This code has one cycle of length 4 ($C_4$). In addition to that, the short cycles of length 6 to 10 were used as the initial input set for Algorithm 1. The algorithm was constrained to find trapping sets of size up to 12 and to use only the trapping sets with the two smallest values of $b$ for each size. Table~\ref{R4_8} shows the dominant trapping sets found by Algorithm 1 and their multiplicities. It is worth mentioning that none of the trapping sets listed in Table \ref{R4_8} contains any of the variable nodes participating in $C_4$. The trapping sets reported in Table~\ref{R4_8} were used to estimate the error floor of the code using the importance sampling technique described in \cite{Cole2008}. Fig.~\ref{IS_MC_3} shows the Monte Carlo simulation results for the frame error rate (FER) and the corresponding error floor estimation based on importance sampling. The results are for a 3-bit min-sum decoder with a maximum number of 50 iterations. As can be seen in Fig.~\ref{IS_MC_3}, the estimation closely matches the Monte Carlo simulation, verifying the dominance of the trapping sets found by Algorithm 1. Monte Carlo simulations also revealed that the most harmful trapping set of this code is the $(6,\,4)$ trapping set. In fact, in almost all the decoding failures, the decoder converged to the (6,4) trapping set. As can be seen in Table~\ref{R4_8}, all the trapping sets have at least 4 unsatisfied check nodes. This makes the exhaustive search methods of ~\cite{Wang2007}, \cite{Wang2009}, \cite{Kyung2010} ineffective for finding the dominant trapping sets of this code. This is while all the trapping sets in Table \ref{R4_8} were found in less than 5 minutes by the proposed algorithm.
\end{example}
\begin{table}[!h]
\caption{\small{Dominant Trapping Sets of the $(1008,\,504)$ Regular LDPC Code ($d_v=4,\,d_c=8$) Obtained by the Proposed Algorithm}}
\centering
\begin{tabular}{|c|c|c|c|c|}
\hline
\cline{1-2}
\small{\small{\textbf{Trapping Set}}}&\small{\textbf{Multiplicity}}\\
\hline\hline
\small{(5,6)}&\small{15}\\\hline
\small{(6,4)}&\small{1}\\\hline
\small{(6,6)}&\small{36}\\\hline
\small{(7,5)}&\small{13}\\\hline
\small{(8,6)}&\small{5}\\\hline
\small{(9,6)}&\small{5}\\\hline
\small{(10,6)}&\small{3}\\\hline
\small{(11,6)}&\small{3}\\\hline
\small{(12,8)}&\small{75}\\\hline

\cline{1-2}
\end{tabular}
\label{R4_8}
\end{table}

\begin{figure}[!t]
\centering
\includegraphics[width=5.8in]{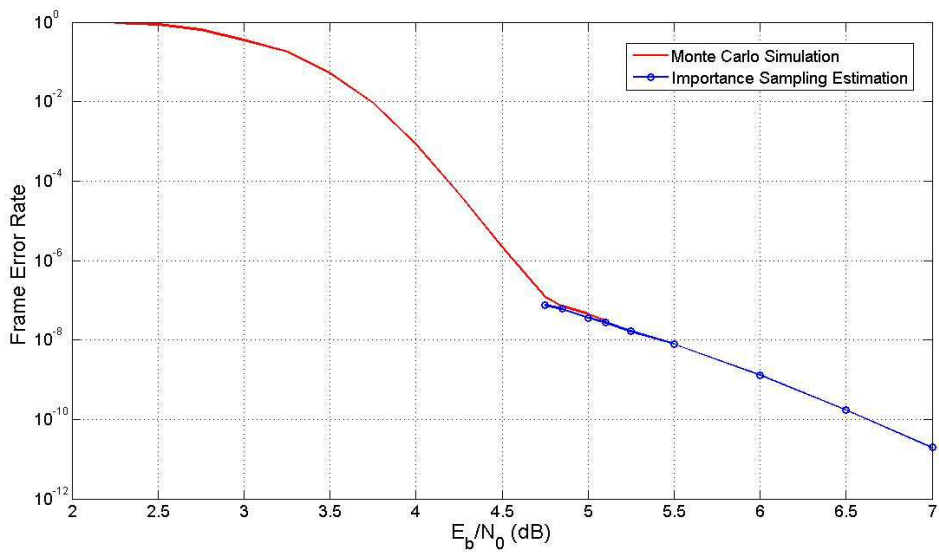}
\caption{Error floor estimation and Monte Carlo simulation for the $(1008,\,504)$ regular LDPC code ($d_v=4,\,d_c=8$).}
\label{IS_MC_3}
\end{figure}

\subsection{Irregular Codes}
In this section, we present the results of applying the proposed algorithm to three irregular LDPC codes. To find the dominant trapping sets of the irregular codes, we used two approaches. In the first approach, we used Algorithms 1 and 2 in the framework described in Algorithm 3. In this approach, as the first step, we used the short cycles of the codes, as well as the low ACE cycles as the initial input set, and applied Algorithm 1. We then used the trapping sets found by Algorithm 1 along with the variable nodes of low degree, and applied Algorithm 2 to expand them. As the second approach, we only used the variable nodes of low degree and cycles with low ACE as the initial input set, and then used Algorithm 2 to recursively expand them to larger trapping sets. Interestingly, for all three codes, the results of the second approach were very close to those of the first one.
\begin{example}
For this example, we consider the irregular LDPC code constructed by the PEG algorithm (\emph{PEGirReg252x504} code~\cite{Online}). This code was also investigated in \cite{Kyung2010} for its fully absorbing sets. For Algorithm 1, the short cycles of length $g$, $g+2$, and  the cycles with length less than 20 and ACE less than 4 were used as the initial input set. The algorithm was constrained to
find only trapping sets of size less than 12 and the threshold $T$ was selected such that only the trapping sets with the four smallest values of $b$ for each size were considered. The resultant trapping sets and variable nodes of degree 2 and 3 were then expanded by adding neighboring degree-2 variable nodes, and finally were examined to find the fully absorbing sets. Table~\ref{pegir} shows the fully absorbing sets found by Algorithm 3 and the exhaustive search algorithm of \cite{Kyung2010}. It should be noted that, similar to \cite{Kyung2010}, we relaxed the condition that degree-2 variable nodes of (fully) absorbing sets must be connected to two satisfied check nodes.
As can be seen from Table \ref{pegir}, the proposed algorithm found almost all the fully absorbing sets of this code.\footnote{The multiplicity for trapping sets  $(7,2)$ and $(8,2)$ are reported as 274 and 468 in \cite{Kyung2010}, respectively. Moreover, no $(7,1)$ or $(8,1)$ trapping set is reported in \cite{Kyung2010}. The values reported for these four trapping sets in the last column of Table \ref{pegir} are based on \cite{Kyung_p}.} Moreover, the proposed algorithm found a number of $(a,\,1)$ trapping sets for $a\ge9$, which were not reported in \cite{Kyung2010}. For the second approach, the cycles of length up to 20 with ACE lower than 4 and the variable nodes of degree 2 and 3 were used as the initial inputs, and the algorithm found almost the same trapping sets as in the first approach. For the running time, the first and the second approaches took 15 minutes and 5 minutes, respectively.
\end{example}
\begin{table}[!h]
\caption{\small{Dominant Fully Absorbing Sets of the \emph{PEGirReg252x504} Code Obtained by the Proposed Algorithm and the Algorithm of \cite{Kyung2010}}}
\centering
\begin{tabular}{|c|c|c|c|c|}
\hline
\cline{1-3}
\small{\textbf{Trapping}} &\small{\textbf{Proposed}}&\small{\textbf{Exhaustive}} \\
\small{\textbf{Set}} &\small{\textbf{Algorithm}}&\small{\textbf{Search \cite{Kyung2010}}} \\
\hline\hline
\small{{(3, 2)}}&\small{219}&\small{219}\\\hline
\small{{(4, 2)}}&\small{208}&\small{208}\\\hline
\small{(5, 2)}&\small{198}&\small{198}\\\hline
\small{{(6, 2)}}&\small{205}&\small{205}\\\hline
\small{(7, 1)}&\small{{ 2}}&\small{{  2}}\\\hline
\small{{(7, 2)}}&\small{{ 271}}&\small{{ 272}}\\\hline
\small{(8, 1)}&\small{{ 8}}&\small{{  8}}\\\hline
\small{{(8, 2)}}&\small{{ 458}}&\small{{ 460}}\\\hline
\small{(9, 1)}&\small{{ 16}}&\small{{  -}}\\\hline
\small{{(9, 2)}}&\small{{ 855}}&\small{{  -}}\\\hline
\small{(10, 1)}&\small{{ 22}}&\small{{  -}}\\\hline
\small{{(10, 2)}}&\small{{ 1533}}&\small{{  -}}\\\hline
\small{(11, 1)}&\small{{ 36}}&\small{{  -}}\\\hline

\cline{1-3}
\end{tabular}
\label{pegir}
\end{table}
\begin{example}
 For this example, we used the $(1944,972)$ structured irregular code with rate $1/2$, adopted in the IEEE 802.11 standard \cite{IEEE802.11n}. We used the same parameters as in the previous example for the two approaches. Table~\ref{1944_972} shows the number of dominant trapping sets of different sizes found by the algorithm of \cite{Abu2010} and the proposed approaches. For this code, both of our approaches found exactly the same set of trapping sets. In fact, all the trapping sets listed in Table~\ref{1944_972} have one of the following three structures: a 2-chain, a single cycle with low ACE, and the combination of a 2-chain and a single cycle of low ACE. For example, all the trapping sets of size less than 7 listed in Table~\ref{1944_972} are 2-chains, and all the $(12,1)$ trapping sets are single cycles of eleven degree-2 variable nodes and one degree-3 variable node.  As can be seen in Table~\ref{1944_972}, for all classes of trapping sets, the proposed algorithms found at least as many trapping sets as the algorithm of \cite{Abu2010} did. The first and the second approaches took 45 and 5 minutes, respectively, to find all the trapping sets in Table~\ref{1944_972}. This is while the algorithm of \cite{Abu2010} took 5 days (on a 2.8-GHz CPU) to find the results reported in Table~\ref{1944_972}.

\begin{table}[!h]
\caption{\small{Dominant trapping sets of the (1944,972) code obtained by the proposed algorithm}}
\centering
\begin{tabular}{|c|c|c|c|c|}
\hline
\cline{1-3}
\small{\textbf{Trapping}} &\small{\textbf{Proposed}}&\small{\textbf{Algorithm}} \\
\small{\textbf{Set}} &\small{\textbf{Algorithm}}&\small{\textbf{of \cite{Abu2010}}} \\
\hline\hline
\small{{(2, 2)}}&\small{810}&\small{-}\\\hline
\small{{(3, 2)}}&\small{729}&\small{-}\\\hline
\small{{(4, 2)}}&\small{648}&\small{648}\\\hline
\small{{(5, 2)}}&\small{567}&\small{567}\\\hline
\small{(6, 2)}&\small{486}&\small{486}\\\hline
\small{{(7, 2)}}&\small{{486}}&\small{{485}}\\\hline
\small{(8, 2)}&\small{{648}}&\small{{637}}\\\hline
\small{{(9, 2)}}&\small{{972}}&\small{{-}}\\\hline
\small{(10, 2)}&\small{{1377}}&\small{{1210}}\\\hline
\small{{(11, 2)}}&\small{{1944}}&\small{{1635}}\\\hline
\small{(12, 1)}&\small{81}&\small{81}\\\hline
\small{{(12, 2)}}&\small{{2754}}&\small{{2166}}\\\hline
\small{(13, 1)}&\small{162}&\small{162}\\\hline
\small{{(14, 1)}}&\small{162}&\small{162}\\\hline
\small{(15, 1)}&\small{{162}}&\small{{-}}\\\hline
\small{{(16, 1)}}&\small{{162}}&\small{{-}}\\\hline
\small{{(17, 1)}}&\small{{162}}&\small{{-}}\\\hline
\small{(18, 1)}&\small{{81}}&\small{{-}}\\\hline

\cline{1-3}
\end{tabular}
\label{1944_972}
\end{table}

 Based on the importance sampling technique of \cite{Cole2008}, the trapping sets in Table \ref{1944_972} with size $l$, $6\le l\le 12$, were used to estimate the error floor of this code for a 3-bit quantized min-sum decoder over the AWGN channel. Fig.~\ref{MC_IS_1944} shows the error floor estimation and the Monte Carlo simulation results for this code. As can be seen in Fig.~\ref{MC_IS_1944}, the importance sampling estimation closely matches the Monte Carlo simulation, further verifying the dominance of the trapping sets found by the proposed algorithm.

\end{example}
\begin{figure}[!h]
\centering
\includegraphics[width=5.8in]{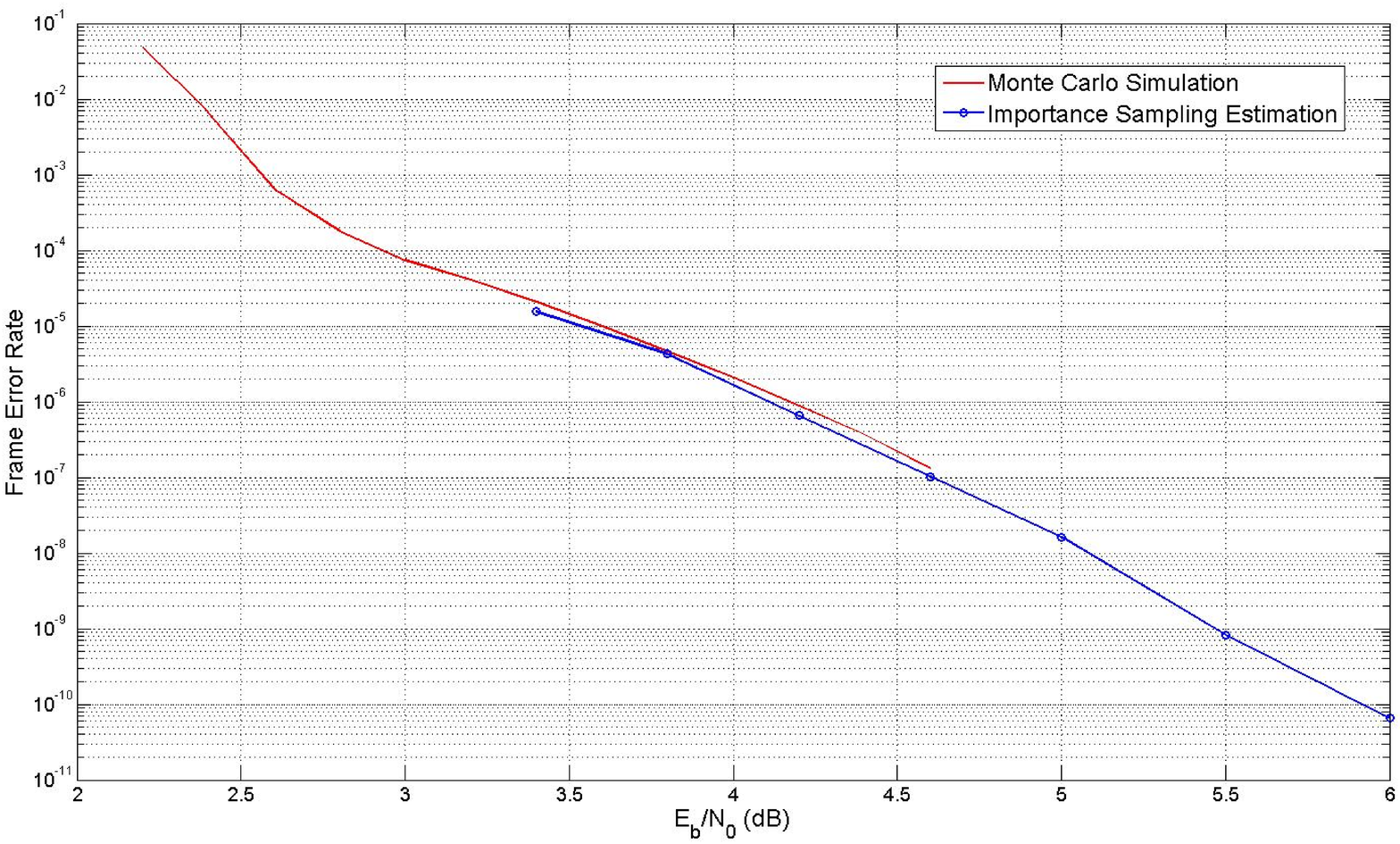}
\caption{Error floor estimation and Monte Carlo simulation for the $(1944,972)$ irregular LDPC code. }
\label{MC_IS_1944}
\end{figure}

\begin{example}
As the last example, we use the following degree distribution optimized for the min-sum algorithm in \cite {Chung2000} and construct a $(1000,\,499)$ LDPC code using the PEG algorithm: $\lambda(x) = .30370x+.27754x^2+.02843x^5+.20014x^6+.19019x^{19}$ and $ \rho(x) = .0160x^5 + .9840x^6$. The girth of the resultant graph is 6, and we use the short cycles of length 6 and 8, and cycles of length up to 20 with ACE less than 4 as the initial input set of Algorithm 2. It takes 1 minute for the algorithm to find the trapping sets of size up to 10. Based on the obtained trapping sets and using the importance sampling, we estimate the error floor of the code. Fig.~\ref{MC_IS_irpeg7} shows the estimation and Monte Carlo simulations for this code. As can be seen in this figure, the estimation closely matches the Monte Carlo simulation results, verifying that the dominant trapping sets of the code have been found by the algorithm.

\begin{figure}[!t]
\centering
\includegraphics[width=5.8in]{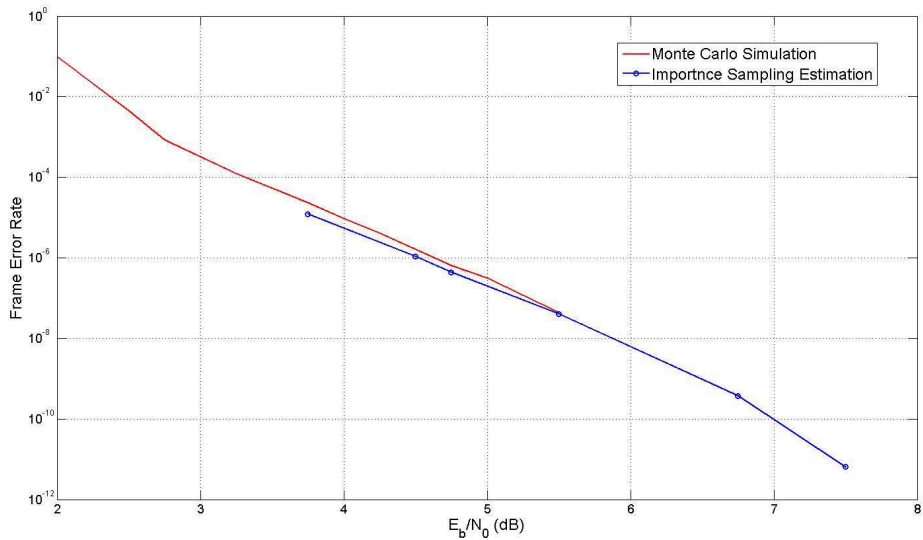}
\caption{Error floor estimation and Monte Carlo simulation for the $(1000,\,499)$ irregular LDPC code.}
\label{MC_IS_irpeg7}
\end{figure}

\end{example}

\section{Conclusions}

In this paper, we proposed an efficient algorithm for finding the dominant trapping sets of an LDPC code.
The algorithm starts from an initial set of trapping sets and recursively and greedily expands them to trapping
sets of larger size. The initial set for regular codes is a set of short cycles, and for irregular codes,
it also includes variable nodes of small degree and cycles with low ACE values. To devise the expansions,
the structure of dominant trapping sets is carefully studied for both regular and irregular codes. The efficiency
and accuracy of the proposed algorithm was demonstrated through a number of examples. It was observed that
the proposed algorithm is faster by up to about two orders of magnitude compared to similar search algorithms.

\appendices
\section{}
In this appendix, we present Lammas ~\ref{lem_nonel} and ~\ref{2chain_size}, used in Sections III and IV, respectively, along with their proofs.
The appendix also contains the proofs for Lemmas \ref{lem99} and \ref{lem10}(i).

\begin{lemma}
For a left-regular graph $G$ with left degree $d_l\ge3$ and girth $g>4$,\footnote{For the case of $d_l = 2$, it is easy to see that any $(a,b)$ elementary trapping set has $b=0$ or $b=2$. For $b=0$, the smallest value of $a$ is $g/2$, which corresponds to the trapping set being a shortest cycle. For an elementary trapping set with $b=2$, the smallest value of $a$ is one, which corresponds to a single variable node. For a non-elementary $(a,b)$ trapping set however, if $b=0$, the smallest value of $a$ is $g$. If $b=2$, the minimum value of $a$ for such a trapping set is $3$.} consider an $(a,\,b)$ trapping set with $b< a$. If such a trapping set is elementary, let the notation $a_e$ denote its size, and consider the case where $d_l (d_l - 1) > b$. Otherwise, for non-elementary trapping sets with $b < a$, let the notations $a_{n1}$ and $a_{n2}$ denote the size of the trapping set if it has at least one unsatisfied check node of degree $d_o >1$ and one satisfied check node of degree $d_e > 2$ in $G(\cal{S})$, respectively. For the two latter cases, suppose that $d_o (d_l - 1) > b$ and $d_e (d_l - 1) > b$, respectively. Then depending on the value of $g$, we have the following two sets of inequalities:
\begin{itemize}
\item [a)] For $g=4k$, where $k$ is an integer larger than 1, we have:\\
\[a_e     \ge 1+d_l+(d_l(d_l-1)-b)\sum_{i=0}^{k-3}{(d_l-1)^i}+ \frac{(d_l(d_l-1)-b)(d_l-1)^{k-2}}{d_l}\,\,,        \]
\[a_{n1} \ge d_e+(d_e(d_l-1)-b)\sum_{i=0}^{k-2}{(d_l-1)^i}\,\,;    \]
\[a_{n2} \ge d_o+(d_o(d_l-1)-b+1)\sum_{i=0}^{k-2}{(d_l-1)^i}\,.    \]

\item [b)] For $g=4k+2$, where $k$ is a positive integer, we have:\\
\[a_e     \ge 1+d_l+(d_l(d_l-1)-b)\sum_{i=0}^{k-2}{(d_l-1)^i}\,\, ,       \]
\[a_{n1} \ge d_e+(d_e(d_l-1)-b)\sum_{i=0}^{k-2}{(d_l-1)^i}+ \frac{(d_e(d_l-1)-b)(d_l-1)^{k-1}}{d_l}\,\,;    \]
\[a_{n2} \ge d_o+(d_o(d_l-1)-b+1)\sum_{i=0}^{k-2}{(d_l-1)^i}+ \frac{(d_o(d_l-1)-b+1)(d_l-1)^{k-1}}{d_l}\, .   \]

\end{itemize}
\label{lem_nonel}
\end{lemma}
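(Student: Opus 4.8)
The plan is to establish all six inequalities by one Moore-bound-style breadth-first counting argument, in which the girth controls the radius out to which the search tree is forced to be genuinely acyclic, and the parameter $b$ controls how severely its branching can be pruned. The only things that change between the six cases are the choice of root and a small amount of parity bookkeeping. I would treat the elementary bound $a_e$ first and then adapt the identical template to $a_{n1}$ and $a_{n2}$.

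For $a_e$ I would pass to the auxiliary graph $G'(\mathcal{S})$, which for an elementary set is obtained by deleting the degree-$1$ check nodes $\Gamma_{\mathrm{o}}(\mathcal{S})$; every surviving check has degree $2$ and may be contracted to an edge, so $G'(\mathcal{S})$ becomes an ordinary graph $H$ on the $a$ variable nodes. A cycle of length $\ell$ in $H$ lifts to a cycle of length $2\ell$ in $G$, so the girth of $H$ is at least $g/2$. Writing $b_v$ for the number of edges from $v\in\mathcal{S}$ to $\Gamma_{\mathrm{o}}(\mathcal{S})$, node $v$ has degree $d_l-b_v$ in $H$ with $\sum_v b_v=b$; the hypothesis $b<a$ is exactly what forces some variable node $v_0$ to have $b_{v_0}=0$, and this full-degree node is the root. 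Growing the BFS tree from $v_0$, the girth bound makes all tree vertices distinct elements of $\mathcal{S}$ out to BFS radius $k-1$ (for $g=4k$) or $k$ (for $g=4k+2$), with full branching contributing a factor $d_l-1$ per variable shell. Because $d_l(d_l-1)>b$, the at most $b$ deficiency edges can prune entire subtrees only from the shallowest non-root shell (the $d_l(d_l-1)$ branches leaving level $1$), and placing them there removes the most, producing the subtracted term $b\left(\sum_{i=0}^{k-3}(d_l-1)^i+\frac{(d_l-1)^{k-2}}{d_l}\right)$ in the $g=4k$ line. At the final, critical shell (BFS distance exactly $g/2$) two tree paths may meet without creating a cycle shorter than $g$; there I would abandon full distinctness and instead count the incoming edges and divide by $d_l$, since each boundary variable absorbs at most $d_l$ of them. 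This division is the source of the fractional terms such as $\frac{(d_l(d_l-1)-b)(d_l-1)^{k-2}}{d_l}$.

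For $a_{n1}$ and $a_{n2}$ I would instead root the BFS at the special check node guaranteed by hypothesis — a satisfied check of degree $d_e$ in the display for $a_{n1}$, and an unsatisfied check of degree $d_o$ in the display for $a_{n2}$. Its distinct variable neighbours form shell $1$, which explains the leading $d_e$ (resp. $d_o$). The propagation is identical, but with two conservative simplifications that keep the estimate a valid lower bound: any check of degree larger than $2$ is treated as a degree-$2$ pass-through (extra edges only create more variables), and each remaining unsatisfied check is charged as a potential degree-$1$ dead end. When the root is itself unsatisfied, only $b-1$ unsatisfied checks are left to prune branches, which is precisely why the branching coefficient is $d_o(d_l-1)-b+1$ rather than $d_e(d_l-1)-b$. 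Finally, rooting at a check rather than a variable shifts every variable shell by one step relative to the girth, so the critical boundary shell — and hence the fractional $1/d_l$ term — appears for $a_e$ exactly when $g=4k$ but for the check-rooted $a_{n1},a_{n2}$ exactly when $g=4k+2$; the conditions $d_e(d_l-1)>b$ and $d_o(d_l-1)>b$ guarantee in each case that the pruned branches fit at the first shell and that the coefficients stay positive.

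The main obstacle I anticipate is not the bookkeeping but the two places where the count is deliberately loose. First, I must argue rigorously that concentrating all deficiency (dead-end) edges at the shallowest non-root shell genuinely minimizes the number of distinct variable nodes — that a deeper placement can only increase the count, and that choosing a full-degree root rules out any still-shallower pruning — so that the resulting expression is a true lower bound over all admissible configurations. Second, I must justify the boundary estimate: the ``divide by $d_l$'' step rests on the observation that any coincidence of two tree paths at a boundary variable uses two of that variable's at most $d_l$ edges, so the number of distinct boundary variables is at least the number of arriving edges divided by $d_l$, and this remains valid even when boundary variables carry their own deficiency edges. Once these monotonicity and distinctness claims are secured, substituting the full-branching geometric sums and separating the $g=4k$ and $g=4k+2$ parities yields the six stated inequalities.
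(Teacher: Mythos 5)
Your proposal follows essentially the same route as the paper's (sketch of a) proof: root a BFS/Moore-bound tree at a variable node with no unsatisfied-check neighbours (guaranteed by $b<a$) for the elementary case, or at the distinguished check node for the non-elementary cases, use the girth to force distinctness of the layers up to depth $g/2$, charge the $b$ unsatisfied checks as degree-one dead ends placed at the shallowest admissible layer, and handle the final half-layer by dividing the number of incoming edges by $d_l$ — including the parity flip of the fractional term between variable-rooted and check-rooted trees. The contraction of satisfied degree-2 checks to edges for the elementary case is only a cosmetic repackaging, and the two monotonicity/boundary claims you flag as needing justification are exactly the points the paper's own proof also asserts without detailed argument.
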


\begin{proof}
Here, we just present the sketch of the proof. For this, we first need the following lemma, whose proof follows later in the appendix.

\emph{Lemma} \ref{lem10}(i): In a left-regular graph $G$ with left degree $d_l\ge2$, if the induced subgraph $G({\cal S})$ of an $(a,\,b)$ trapping set $\mathcal{S} $ does not contain any cycle, then $b\ge a(d_l-2)+2$. The inequality is satisfied with equality for elementary trapping sets.

Based on Lemma \ref{lem10}(i), it is clear that a trapping set with $b < a$ has at least one cycle. Therefore, considering any variable (or check) node of $\mathcal{S} $ as the root, and growing $G(\mathcal{S})$ from that node, one can construct a tree of at least $g/2$ layers, where the layers contain either variable or check nodes alternately, with no repetition of nodes. The number of variable nodes in this tree can be used as a lower bound on the number of variable nodes in $\mathcal{S} $. In this tree, the number of check nodes in layer $i>1$ of the tree, $N_c^i$, is $N_c^i=(d_l-1)N_v^{i-1}$, where $N_v^{i-1}$ is the number of variable nodes in layer $i-1$. Similarly, $N_v^{i}=\sum{(d_{c_j^{i-1}}-1)}$, where $d_{c^{i-1}_j}$ is the degree (within $G({\cal S})$) of the $j^{th}$ check node in layer $i-1$, and the summation is over all the check nodes in layer $i-1$. To minimize the number of variable nodes in the tree, one needs to make $\sum{(d_{c_j^{i-1}}-1)}$ as small as possible in each check node layer of the tree. In particular, this should be done at the upper layers of the tree if possible, since these layers contribute the most in the total number of variable nodes in the tree. In addition, to obtain a lower bound on the size of the trapping sets, we assume that even for the non-elementary case, except for one check node, the degrees of all the other check nodes in $G(\mathcal{S})$ are either 1 or 2. Moreover, we assume that all the check nodes of degree 1 are in the first (upper) layer(s) of check nodes after the root layer.

For the case of an elementary trapping set, according to the assumption of $b< a$, there is at least one variable node that is not connected to any unsatisfied check nodes. Considering such a variable node as the root node, all the check nodes in the first layer are satisfied check nodes. That is, $N_v^0=1$ (root node), $N_c^1=N_v^2=d_l$, $N_c^3=d_l(d_l-1)$, $N_v^4=d_l(d_l-1)-b$ and  $N_c^{i-1}=(d_l-1)N_v^{i-2}$, $N_v^i=N_c^{i-1}$, for $i=6,8,\ldots$.\footnote{Here, based on the statement of the lemma, we have assumed that all the unsatisfied check nodes can fit in the third layer of the tree. In the case that $d_l(d_l-1)-b \le 0$, some of the unsatisfied check nodes have to be located in the next layer(s), and the above equations and the claims of the lemma will have to be accordingly revised.} Therefore, the total number of variable nodes in the constructed tree is $1+d_l+(d_l(d_l-1)-b)+(d_l(d_l-1)-b)(d_l-1)+\ldots$. Distinction should be made between the cases of $g = 4k +2$ and $g =4k$. While in the former, the last layer of the tree consists of variable nodes, in the latter, it consists of check nodes. In this case, for each set of $d_l$ check nodes in the last layer of the tree, there must be at least one other variable node in $\mathcal{S} $. The sketch of the proofs for the non-elementary cases are similar to that of the elementary case, with the difference that the check node of degree $d_o$ or $d_e$ is used as the root node.
\end{proof}

\emph{Proof of Lemma} \ref{lem99}:

Consider the $d(v)$ neighbors of $v$ in $G(\mathcal{S})$. At least $d(v)-b$ of them are in $\Gamma_{\mathrm{e}}(\mathcal{S})$ and are thus connected to other variable nodes in $\mathcal{S} $. None of such variable nodes can share more than one check node from $\Gamma_{\mathrm{e}}(\mathcal{S})$ with $v$, because of the condition $g>4$. This implies that there are at least $d(v)-b$ variable nodes in $\mathcal{S}\backslash \{v\}$. \hfill  $\blacksquare$

\emph{Proof of Lemma} \ref{lem10}(i):

 Since $G(\mathcal{S})$ does not contain any cycle, it forms a tree (note that $G(\mathcal{S})$ is connected). Suppose that $G(\mathcal{S})$ is grown from a variable node of $\mathcal{S} $ as the root, one layer at a time, until along each path, the growth is terminated by reaching a check node as a leaf. These nodes are the unsatisfied check nodes of degree one. In the tree, each variable node, except the root, has a parent which is a check node of degree $\ge2$. In the case that $\mathcal{S} $ is elementary, the degree of the parent check nodes is 2, and hence each check node is the parent to one variable node. There are thus exactly $a-1$ check nodes of degree 2 in $G(\mathcal{S})$. Since $G(\mathcal{S})$ is a tree, the number of its nodes is more than the number of its edges by one. The total number of nodes in the graph is $a+(a-1)+b_1$ and the total number of edges is  $a\cdot d_v$, where $b_1$ is the number of unsatisfied check nodes of degree one. For an elementary trapping set, we thus have $2a+b_1-1=ad_v+1$, which implies $b=b_1=a(d_v-2)+2$. In the case that $\mathcal{S} $ is not elementary, some variable nodes may share the same parent. The number of parent check nodes is thus less than $a-1$, and therefore $b\ge b_1>a(d_v-2)+2$. \hfill  $\blacksquare$

\begin{lemma}
 Let $G = (L \cup R,E)$ be a left-regular bipartite graph with left degree $2$. Consider a set $\mathcal{S}\in L$, for which the induced subgraph is a tree and has the longest path of length $2k-2$. Then $|\mathcal{S}|\le \sum_{i=0}^{k-2}{(d_{c,max}-1)^{\lfloor\frac{i+1}{2} \rfloor}}$ , where $|\mathcal{S}|$ is the number of nodes in $\mathcal{S} $ and $d_{c,{max}}$ is the maximum degree of the nodes in $R$.
 \label{2chain_size}
 \end{lemma}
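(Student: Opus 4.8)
The plan is to reduce the statement to a classical ``Moore bound'' for trees of bounded degree and bounded diameter. First I would record the basic structural fact that, since $G$ is left-regular with left degree $2$ and $\mathcal{S}$ is a set of variable nodes, every variable node of $\mathcal{S}$ has degree exactly $2$ in the induced subgraph $G(\mathcal{S})$ (both of its check-neighbours lie in $\Gamma(\mathcal{S})$). Consequently no variable node can be a leaf of the tree $G(\mathcal{S})$, so all leaves --- and in particular the two endpoints of any longest path --- are check nodes.

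Next I would contract (topologically suppress) every degree-$2$ variable node, merging its two incident edges into a single edge joining its two check-neighbours. Because $G(\mathcal{S})$ is a tree and suppressing degree-$2$ vertices preserves acyclicity, the result is a tree $H$ whose vertex set is $\Gamma(\mathcal{S})$ and whose edges are in bijection with the variable nodes of $\mathcal{S}$; tree-ness of $G(\mathcal{S})$ also rules out two variable nodes sharing both check-neighbours, so $H$ is simple. Since $H$ is a tree, $|\mathcal{S}| = |E(H)| = |V(H)| - 1$. The suppression preserves check-node degrees, so $H$ has maximum degree at most $\Delta := d_{c,max}$, and a longest path $c_0 v_1 c_1 \cdots v_{k-1} c_{k-1}$ of length $2k-2$ in $G(\mathcal{S})$ (which runs between two check leaves, hence contains $k$ check nodes and $k-1$ variable nodes) maps to a longest path of length $k-1$ in $H$. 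Thus $H$ is a tree with maximum degree $\le \Delta$ and diameter $k-1$, and it remains to bound $|V(H)|$.

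For this I would root $H$ at its center --- a single vertex when the diameter $k-1$ is even and the central edge when it is odd --- relying on the standard fact that the center of a tree is the midpoint of any longest path, so that for even diameter $k-1=2r$ all vertices lie within distance $r$ of the central vertex, while for odd diameter $k-1=2r+1$ the central edge splits $H$ into two subtrees of depth at most $r$. Writing $q := \Delta - 1$, each non-root vertex of a layer has at most $q$ children while the root of each (sub)tree has at most $\Delta$ or (across the central edge) $q$ children, which gives, layer by layer, the two parity-dependent bounds
\[
|V(H)| \le 1 + \textstyle\sum_{j=1}^{r}\Delta\,q^{\,j-1}\ \ (k-1=2r),\qquad |V(H)| \le 2\,\textstyle\sum_{j=0}^{r}q^{\,j}\ \ (k-1=2r+1).
\]
Subtracting $1$ and simplifying $(q+1)\sum_{j=0}^{r-1}q^{\,j}$ in the even case and $2\sum_{j=0}^{r}q^{\,j}-1$ in the odd case, both reduce to $1 + 2\sum_{j=1}^{r-1}q^{\,j} + q^{\,r}$ and $1 + 2\sum_{j=1}^{r}q^{\,j}$ respectively, which I would then check equal $\sum_{i=0}^{k-2} q^{\lfloor (i+1)/2\rfloor}$ by reading off the exponent pattern $0,1,1,2,2,3,3,\dots$ produced by the floor, pairing the equal consecutive exponents against the factor-of-two terms.

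The main obstacle I anticipate is not any single hard idea but the bookkeeping: cleanly justifying the center-based layered (Moore) count for both parities of $k-1$, and then verifying that the two resulting closed forms coincide with the single floor-function expression in the statement. The degenerate cases $\Delta = 2$ (where $H$ degenerates to a path and the maximizer is a $2$-chain) and very small $k$ should be checked to confirm that the summation form remains valid without any division by $\Delta-2$.
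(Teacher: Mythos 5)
Your proof is correct and follows essentially the same route as the paper's: an extremal, layered (Moore-type) count of the balanced tree in which every check node has degree $d_{c,max}$ and every pair of leaf check nodes is joined by a path of length $2k-2$. The paper only sketches this count in two sentences, whereas your contraction of the degree-2 variable nodes to a check-node tree $H$ with $|\mathcal{S}|=|V(H)|-1$, rooted at its center and counted layer by layer, is a complete formalization of the same idea, and your parity bookkeeping against the floor-function expression checks out.
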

 \begin{proof}
The upper bound is derived by counting the number of variable nodes in a tree where the number of check nodes is maximized with the constraint that the longest path has $2k-2$ edges. This implies that there is a path of length $2k-2$ between any two leaf check nodes of $G(\mathcal{S})$. In addition, to maximize $|\mathcal{S}|$, the degree of all the check nodes in $G(\mathcal{S})$ is assumed to be $d_{c,max}$.
 \end{proof}

\section*{Acknowledgement}
The authors wish to thank the Associate Editor and the anonymous reviewers whose comments have improved the presentation of the paper.

\end{document}